\newtheorem{theorem}{Theorem}
\newcommand{\1}{\mathbbm{1}}
\newcommand{\A}{\mathcal{A}}
\newcommand{\ket}{\rangle}
\newcommand{\bra}{\langle}
\begin{document}

\title[Generalized Shortcuts to Adiabaticity]{Generalized Shortcuts to Adiabaticity and Enhanced Robustness Against Decoherence}

\author{Alan C. Santos \& Marcelo S. Sarandy}

\address{Instituto de F\'{i}sica, Universidade Federal Fluminense, Av. Gal. Milton Tavares de Souza s/n, Gragoat\'{a}, 24210-346 Niter\'{o}i, Rio de Janeiro, Brazil}
\ead{msarandy@id.uff.br}
\vspace{10pt}
\begin{indented}
\item[]August 2017
\end{indented}

\begin{abstract}
Shortcuts to adiabaticity provide a general approach to mimic adiabatic quantum processes via arbitrarily 
fast evolutions in Hilbert space. For these counter-diabatic evolutions, higher speed comes at higher energy 
cost. Here, the counter-diabatic theory is employed as a minimal energy demanding scheme 
for speeding up adiabatic tasks. As a by-product, we show that this approach can be used to obtain infinite 
classes of transitionless models, including time-independent Hamiltonians under certain conditions over the 
eigenstates of the original  Hamiltonian. We apply these 
results to investigate shortcuts to adiabaticity in decohering environments by introducing the requirement 
of a fixed energy resource. In this scenario, we show that generalized transitionless evolutions can be more 
robust against decoherence than their adiabatic counterparts. We illustrate this enhanced robustness both 
for the Landau-Zener model and for quantum gate Hamiltonians.
\end{abstract}

\section{Introduction}
The adiabatic theorem~\cite{Born:28,Kato:50,Messiah:book,Teufel:03} constitutes a successful strategy for 
eigenstate tracking in quantum information and quantum control (see, e.g., Ref.~\cite{Jing:16}). 
It states that a system initially prepared in an eigenstate of a time-dependent Hamiltonian $H(t)$ 
will evolve to the corresponding instantaneous eigenstate at a later time $T$, provided that $H(t)$ 
varies smoothly and that $T$ is much larger than a power of the relevant minimal inverse 
energy gap.
It is worth highlighting that the validity conditions of the adiabatic approximation have been 
revisited (see, e.g. Refs~\cite{Marzlin:04,Sarandy:04,Tong:05,Suter,Jianda-Wu:08,Amin:09}),  
which has implied in enhanced formulations of the adiabatic theorem~\cite{Ambainis:04,Jansen:07} 
(see also Ref.~\cite{Tameem-Lidar} for a recent review).
In a real open-system scenario, 
the performance of the adiabatic dynamics is upper bounded by the competition between the adiabatic 
time scale, which is favored by a slow evolution, and the typically short decoherence time scales. 
This interplay provides an optimal time scale for adiabatic processes in decohering environments~\cite{Sarandy:05,Sarandy:05-2}.  

The adiabatic dynamics can be reproduced by generalized transitionless evolutions obtained via shortcuts 
to adiabaticity~\cite{Demirplak:03,Demirplak:05,Berry:09}. Such accelerated processes allow us to derive  
an exact adiabatic evolution at an arbitrary finite time. Shortcuts to adiabaticity have been used to speed 
up adiabatic processes in a number of applications, e.g. tracking of many-body systems across quantum 
phase transitions~\cite{Adolfo:13,Saberi:14,Hatomura:17}, quantum gate Hamiltonians~\cite{SciRep:15,PRA:16,Front:16}, 
heat engines in quantum thermodynamics~\cite{Adolfo:16}, among others (e.g., Refs.~\cite{Stefanatos:14,Lu:14,Deffner:16,XiaPRA:14,Song:16}). 
The robustness of such transitionless evolutions has recently been studied through different experimental 
architectures, as nitrogen-vacancy setups~\cite{Liang:16}, trapped ions~\cite{An:16}, atoms in cavities~\cite{Xia:16,Yi:17}, 
nuclear magnetic resonance (NMR)~\cite{Ramanathan:16}, and optomechanics~\cite{Zhang:16}.
Naturally, the speed of the evolution is constrained by the energy cost of the implementation, with faster 
evolutions being more energy demanding~\cite{SciRep:15}. By providing identical energy resources at a finite 
evolution time $\tau$,
a fundamental problem is then whether shortcuts to adiabaticity can provide a more efficient performance 
in terms of fidelity than their adiabatic counterparts by adjusting its pace within the decoherence 
time scales. We address this question by considering a general counter-diabatic theory~\cite{Torrontegui:13}, 
which is here optimized for a minimum energy consumption. As a by-product, we apply this 
general approach to obtain infinite classes of transitionless models, including 
time-independent Hamiltonians under certain conditions over the eigenstates of the 
original (adiabatic) Hamiltonian. Concerning robustness against decoherence, we consider 
Markovian open systems and impose fixed energy resources. This is a key point, since unlimited energy 
provides arbitrarily fast dynamics already for adiabatic evolution, through an arbitrarily large gap 
between the ground and first excited states. It is then shown that a supremacy of the
counter-diabatic dynamics can always be achieved by adjusting the evolution rate.  
This is illustrated in the Landau-Zener model and in quantum gate Hamiltonians.  

\section{Generalized transitionless dynamics theory and energy cost} 
The starting point for the counter-diabatic theory is the evolution operator $U\left( t\right)$, 
which can be defined as (see, e.g., Ref.~\cite{Torrontegui:13})
\begin{equation}
U\left( t\right) =\sum\nolimits_{n}e^{i \int_{0}^{t} \theta _{n}\left( \xi \right)
	d \xi } \vert n_t \ket \bra n_{0} \vert \text{ ,}  \label{operadorU}
\end{equation}%
where $\theta _{n}\left( t\right)$ is a set of arbitrary real phases~\cite{Muga:10,Chen:11} and 
$\{ \vert n_t \ket = \vert n (t) \ket \}$ is the set of eigenstates of the original (adiabatic) 
Hamiltonian $H_{0} ( t )$. Let us assume that the quantum system is initially prepared 
in an specific eigenstate $\vert k_0\ket$ of $H_{0} ( t )$, 
namely, $\vert \psi (0) \ket = \vert k_0\ket$. Then, the Hamiltonian 
$H_{ \text{\text{SA}} }\left( t\right) =-i  U\left( t\right) \dot{U}^{\dag }\left(t\right)$, which denotes 
the {\it shortcut to the adiabatic} Hamiltonian $H_0(t)$, 
evolves the system to its instantaneous eigenlevel 
$\vert \psi (t) \ket = e^{i \int_{0}^{t} \theta _{k}\left( \xi \right) d \xi } \vert k_t\ket$. 
Explicitly, we write $H_{ \text{\text{SA}} }\left( t\right)$ as ($\hbar = 1$)
\begin{equation}
H_{ \text{\text{SA}} }\left( t\right) =i  \sum\nolimits_{n} \left( |\dot{n}_t\rangle
\langle n_t |+i\theta _{n}\left( t\right) |n_t
\rangle \langle n_t | \right) \text{ .} 
\label{HSA}
\end{equation}
The functions $\theta _{n}\left( t\right)$ have originally been identified with the adiabatic 
phase  $\theta_n(t) = - E_n(t) + i \langle n_t |\dot{n}_t\rangle$~ \cite{Berry:84}, which 
exactly mimics an adiabatic evolution. In this particular case, we can write 
$H_{\text{SA}}(t)$ as $H_0(t) + H_{\text{CD}}(t)$, where 
$H_0(t) = \sum_n E_n(t) |n_t\rangle \langle n_t|$ is the Hamiltonian that drives the adiabatic 
dynamics and $H_{\text{CD}}\left( t\right) =i  \sum_{n}\left( \left\vert 
\dot{n}_t\right\rangle \left\langle n_t\right\vert +\left\langle \dot{n}_t|n_t\right\rangle 
\left\vert n_t\right\rangle \left\langle n_t\right\vert
\right)$ is the {\it counter-diabatic} Hamiltonian.

\subsection{Quantum phases in optimal transitionless quantum driving}

There is a number of situations for which we need not exactly mimic an adiabatic 
process, but only assure that the system is kept in an instantaneous eigenstate (independently 
of its associated quantum phase)~\cite{SciRep:15,PRA:16,Front:16,Adolfo:16,Stefanatos:14,Lu:14,
	Deffner:16,Liang:16,An:16,Xia:16,Zhang:16,Marcela:14,Chen:10}. 
This generalized dynamics in terms of arbitrary phases $\theta_n(t)$ will be denoted 
as a {\it transitionless} evolution. Now, we will show that $\theta_n(t)$ can be nontrivially 
optimized in transitionless evolutions both in terms of energy cost and robustness against 
decoherence effects. In this direction, we adopt as a measure of energy cost the average 
Hilbert-Schmidt norm of the Hamiltonian throughout the 
evolution, which is 
given by~\cite{Zheng:16,SciRep:15,Campbell-Deffner:17,Nathan:14}
\begin{equation}
\Sigma  \left( \tau \right) = \frac{1}{\tau }\int_{0}^{\tau }||H(t)|| \text{ }dt = \frac{1}{\tau }\int_{0}^{\tau }\sqrt{\text{Tr}%
	\left[ {H}^{2}\left( t\right) \right] }\text{ }dt \text{ \ ,} \label{costGeneric}
\end{equation}
where $\tau$ denotes the total evolution time.
The energy cost $\Sigma  \left( \tau \right)$ aims at identifying changes in the energy coupling constants and gap structure, 
which typically account for the effort of speeding up adiabatic processes. It is a well-defined measure for 
finite-dimensional Hamiltonians exhibiting non-degeneracies in their energy spectra. Such Hamiltonians describe 
the quantum systems within the scope of this work. Therefore, Eq.~(\ref{costGeneric}) is applicable, e.g., for  
generic systems composed of a finite number of quantum bits (qubits) under magnetic or electric fields 
(see Refs.~\cite{Nathan:14,Zheng:16} for similar cost measures). Note that Eq.~(\ref{costGeneric}) is 
non-invariant with respect to a change of the zero energy offset. However, by adopting a fixed 
reference frame, it can be used to quantify the energy cost involved in attempts of  
accelerating the adiabatic path, either via an increasing of the energy gap in the adiabatic approach or via 
a reduction of $\tau$ by adjusting the relevant energy couplings in the counter-diabatic theory. In addition, as $\tau$ can be set by the quantum speed limit~\cite{Deffner:13}, Eq.~(\ref{costGeneric}) 
allows us to establish a trade-off between speed and energy cost for an arbitrary 
dynamics~\cite{SciRep:15,Campbell-Deffner:17}. For instance, in NMR experimental setups, 
the  quantity $||H(s)||$ represents how intense a magnetic field $\vec{B}(s)$ is expected to be in order to control the speed of such a dynamics. 

Now, let us discuss how, for a fixed time $\tau$, the energy cost in a transitionless evolution can be minimized by a suitable choice of the 
arbitrary parameters $\theta_{n}\left( t\right)$. Remarkably, this optimization can be analytically 
derived, which is established by Theorem~\ref{TheoOptmEner} below. Its derivation is provided 
in \ref{Theorem1}. 

\begin{theorem} \label{TheoOptmEner}
	Consider a closed quantum system under adiabatic evolution governed by a Hamiltonian 
	$H_0\left( t\right) $. The energy cost to implement its generalized transitionless counterpart,  
	driven by the Hamiltonian $H_{\text{SA}}(t)$, can be minimized by setting  
	\begin{equation}
	\theta _{n}\left( t\right) = \theta ^{\text{min}}_{n}\left( t\right) = -i\langle \dot{n}_{t}|n_{t}\rangle \text{ .}
	\label{OptTeta}
	\end{equation}
\end{theorem}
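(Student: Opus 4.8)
The plan is to reduce the minimization of the functional $\Sigma(\tau)$ in Eq.~(\ref{costGeneric}) to a pointwise-in-time optimization of the integrand. Since the phases $\theta_n(t)$ are free real functions that may be prescribed independently at each instant, and since the square root is monotonically increasing, minimizing $\Sigma(\tau)$ is equivalent to minimizing $\text{Tr}[H_{\text{SA}}^2(t)]$ separately at each fixed $t$ over the numbers $\{\theta_n(t)\}$. I would therefore work at a generic but fixed time and treat the $\theta_n$ as ordinary real variables.

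First I would compute $\text{Tr}[H_{\text{SA}}^2]$ directly from Eq.~(\ref{HSA}). Splitting $H_{\text{SA}} = A + B$ with $A = i\sum_n |\dot{n}_t\rangle\langle n_t|$ and $B = -\sum_n \theta_n |n_t\rangle\langle n_t|$, I would expand $\text{Tr}[H_{\text{SA}}^2] = \text{Tr}[A^2] + 2\,\text{Tr}[AB] + \text{Tr}[B^2]$ using the orthonormality $\langle n_t|m_t\rangle = \delta_{nm}$ and the cyclicity of the trace. The decisive structural point is that $\text{Tr}[A^2]$ carries no dependence on the phases, while the other two terms collapse to single sums, namely $\text{Tr}[B^2] = \sum_n \theta_n^2$ and $2\,\text{Tr}[AB] = -2i\sum_n \theta_n \langle n_t|\dot{n}_t\rangle$. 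Hence $\text{Tr}[H_{\text{SA}}^2]$ is, up to an additive $\theta$-independent constant, a sum of decoupled quadratics in the individual $\theta_n$, each with positive leading coefficient.

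The one identity I would invoke is obtained by differentiating $\langle n_t|n_t\rangle = 1$, giving $\langle n_t|\dot{n}_t\rangle = -\langle\dot{n}_t|n_t\rangle$; this shows $\langle n_t|\dot{n}_t\rangle$ is purely imaginary, which makes the cross term real and confirms that $A$, and therefore $H_{\text{SA}}$, is Hermitian, so $\text{Tr}[H_{\text{SA}}^2]$ is a bona fide non-negative energy. Minimizing each parabola, either by completing the square or by setting $\partial_{\theta_n}\text{Tr}[H_{\text{SA}}^2] = 0$, returns the unique stationary point $\theta_n = i\langle n_t|\dot{n}_t\rangle = -i\langle\dot{n}_t|n_t\rangle$, which is precisely Eq.~(\ref{OptTeta}). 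As a consistency check I would verify that at this choice the diagonal contributions cancel, leaving $\text{Tr}[H_{\text{SA}}^2]$ equal to the manifestly non-negative off-diagonal sum $\sum_{n\neq m}|\langle m_t|\dot{n}_t\rangle|^2$.

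I do not expect a genuine obstacle here, as the core is a completing-the-square argument. The two points demanding care are, first, the justification that minimization commutes with the time-integration, which is legitimate precisely because each $\theta_n(t)$ is an independent degree of freedom at every $t$; and second, the careful bookkeeping of factors of $i$ together with the Hermiticity identity $\langle m_t|\dot{n}_t\rangle = -\langle\dot{m}_t|n_t\rangle$, which is what guarantees that the trace of the square is the correct real cost to be extremized.
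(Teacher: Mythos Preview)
Your proposal is correct and follows essentially the same route as the paper: compute $\mathrm{Tr}[H_{\text{SA}}^2(t)]$, recognize it as a sum of decoupled quadratics in the $\theta_n$, and minimize each one to obtain Eq.~(\ref{OptTeta}). Your extra care in justifying the pointwise-in-time minimization and in checking via $\langle n_t|\dot n_t\rangle=-\langle \dot n_t|n_t\rangle$ that the cross term is real (and hence that the optimization is over a genuine real quadratic) are welcome refinements, but the argument is the same completing-the-square/first-derivative computation the paper carries out.
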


In particular, for any evolution such that the quantum parallel-transport condition is 
verified~\cite{Berry:84}, the energy cost to implement a transitionless evolutions is 
always optimized by choosing $\theta_{n}^{min}\left( t\right)=0$. 
This approach is useful for providing both realistic and energetically optimal Hamiltonians in several physical scenarios. 
For example, by considering nuclear spins driven by a magnetic field $\vec{B}(t)$ in a nuclear magnetic resonance setup, the energy 
cost can be optimized by adjusting $\theta_{n}^{min}\left( t\right)$ such that the magnitude $B(t)$ of the magnetic field is reduced, 
since $||H(t)|| \propto B(t) $.

As a by-product, 
the generalized counter-diabatic theory can be used as a tool to yield time-independent 
Hamiltonians for transitionless evolutions. In general, the Hamiltonian $H_{\text{SA}}(t)$ has its form 
constrained both by the choice of the phases $\theta_{n} \left( t\right) $ and by eigenstates 
of the adiabatic Hamiltonian $H_0(t)$. Thus, we can delineate under what conditions we can choose 
the set $\{\theta_{n} \left( t\right)\}$ in order to obtain a time-independent Hamiltonian for a transitionless evolution.
To answer this question, we impose $\dot{H}_{ \text{\text{SA}} }\left( t\right)=0$ considering 
arbitrary phases $\theta_{n}\left( t\right)$. This leads to Theorem~\ref{TheoTimeIndep} below.  
Its derivation is provided in \ref{Theorem2}.

\begin{theorem} \label{TheoTimeIndep}
	Let $H_0\left( t\right) $ be a discrete quantum Hamiltonian, with $\{|m_{t}\rangle\}$ denoting its 
	set of instantaneous eigenstates. If $\{|m_{t}\rangle\}$ satisfies $\langle k_t |\dot{m}_{t}\rangle = c_{km}$, 
	with $c_{km}$ complex constants $\forall k,m$, then a family of time-independent 
	Hamiltonians $H^{\{\theta\}}$ for generalized transitionless evolutions can be defined by setting 
	$\theta_{m}\left( t\right) = \theta$, with $\theta$ a single arbitrary real constant $\forall m$.
\end{theorem}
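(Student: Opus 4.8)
The plan is to start from the explicit transitionless generator in Eq.~(\ref{HSA}) and specialize it to the uniform choice $\theta_n(t)=\theta$. Writing $H_{\text{SA}}(t)=i\sum_n|\dot n_t\rangle\langle n_t|-\sum_n\theta_n(t)|n_t\rangle\langle n_t|$, the phase term collapses, for a single common constant $\theta$, to $-\theta\sum_n|n_t\rangle\langle n_t|=-\theta\1$ by completeness of the instantaneous eigenbasis at every $t$. This piece is already manifestly time-independent, and it is precisely the uniformity of $\theta$ across all levels that is needed: distinct constants $\theta_n$ would leave a genuinely time-dependent combination of the evolving projectors $|n_t\rangle\langle n_t|$. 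Hence the whole problem reduces to showing that the ``kinetic'' part $K(t):=i\sum_n|\dot n_t\rangle\langle n_t|$ is constant under the hypothesis $\langle k_t|\dot m_t\rangle=c_{km}$.

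For this I would introduce the unitary $V(t)=\sum_n|n_t\rangle\langle n_0|$ that propagates the initial eigenbasis to the instantaneous one, so that $U(t)=e^{i\theta t}V(t)$ and $K(t)=i\dot V V^{\dagger}$. Two facts about the generator $C:=V^{\dagger}\dot V$ do the work. First, inserting the eigenbasis gives $C=\sum_{m,n}\langle m_t|\dot n_t\rangle\,|m_0\rangle\langle n_0|=\sum_{m,n}c_{mn}\,|m_0\rangle\langle n_0|$, which by the hypothesis has frozen coefficients $c_{mn}$ and fixed operators $|m_0\rangle\langle n_0|$; thus $C$ is a constant matrix. Second, differentiating $V^{\dagger}V=\1$ shows $C=-C^{\dagger}$, i.e.\ $C$ is anti-Hermitian (equivalently $c_{mn}=-\overline{c_{nm}}$, the orthonormality constraint).

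Because $C$ is constant, the relation $\dot V=VC$ integrates with $V(0)=\1$ to $V(t)=e^{Ct}$, and anti-Hermiticity gives $V^{\dagger}=e^{-Ct}$. Then $K(t)=i\,e^{Ct}C\,e^{-Ct}=iC$, since $C$ commutes with its own exponential. Collecting the two pieces yields the manifestly time-independent family
\[
H^{\{\theta\}}=-\theta\1+iC=-\theta\1+i\sum_{m,n}c_{mn}\,|m_0\rangle\langle n_0|,
\]
parametrized by the free real constant $\theta$, which proves the claim. As an alternative route that avoids integrating the ODE and matches the ``impose $\dot H_{\text{SA}}=0$'' viewpoint, one may differentiate $K(t)$ directly: expanding $|\dot n_t\rangle=\sum_k c_{kn}|k_t\rangle$ and $|\ddot n_t\rangle=\sum_j (C^2)_{jn}|j_t\rangle$ (the latter using $\dot c_{kn}=0$), the contributions $\sum_n|\ddot n_t\rangle\langle n_t|$ and $\sum_n|\dot n_t\rangle\langle\dot n_t|$ cancel after applying $c_{kn}=-\overline{c_{nk}}$.

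I expect the only genuine obstacle to be bookkeeping rather than concept: one must keep straight that the coefficients $c_{mn}$ are frozen while the ket/bra operators are not, and that the cancellation (equivalently the commutator $[C,e^{Ct}]=0$) relies on $C$ being truly time-independent — which is exactly where the hypothesis enters, and without which $\dot V V^{\dagger}$ would not reduce to the constant $C$. A secondary point worth flagging is that the statement presupposes a smooth, globally defined orthonormal eigenbasis $\{|m_t\rangle\}$, and that the condition $\langle k_t|\dot m_t\rangle=c_{km}$ is gauge (phase-choice) dependent, so the family $H^{\{\theta\}}$ is associated with a particular smooth frame for the eigenstates of $H_0(t)$.
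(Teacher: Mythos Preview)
Your argument is correct, but it proceeds along a genuinely different line than the paper's. The paper works entirely in the instantaneous eigenbasis: it expands the matrix elements $\langle k_t|\dot H_{\text{SA}}(t)|m_t\rangle$, uses $\langle k_t|\dot n_t\rangle=-\langle\dot k_t|n_t\rangle$ to reduce them to $i\frac{d}{dt}\langle k_t|\dot m_t\rangle-\dot\theta_k\delta_{km}-(\theta_m-\theta_k)\langle k_t|\dot m_t\rangle$, and then reads off from the hypothesis $\langle k_t|\dot m_t\rangle=c_{km}$ that the diagonal vanishes iff each $\theta_m$ is constant and the off-diagonal vanishes iff all the $\theta_m$ coincide. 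You instead package the eigenframe into the unitary $V(t)=\sum_n|n_t\rangle\langle n_0|$, observe that the hypothesis is exactly the statement that the generator $C=V^\dagger\dot V$ is a constant (anti-Hermitian) matrix, integrate $\dot V=VC$ to $V=e^{Ct}$, and conclude $i\dot V V^\dagger=iC$. What your route buys is an explicit closed form $H^{\{\theta\}}=-\theta\1+i\sum_{m,n}c_{mn}|m_0\rangle\langle n_0|$ and a transparent structural reason for time-independence (constant generator $\Rightarrow$ one-parameter group); what the paper's route buys is that it shows the choice $\theta_m(t)=\theta$ is essentially forced by $\dot H_{\text{SA}}=0$ (at least when the $c_{km}$ are nonzero), rather than merely sufficient. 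Your alternative computation of $\dot K$ via $|\ddot n_t\rangle=\sum_j(C^2)_{jn}|j_t\rangle$ is in spirit the same cancellation the paper obtains, just organized through the matrix $C$.
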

\subsection{Transitionless dynamics under decoherence}

Theorems \ref{TheoOptmEner} and 
\ref{TheoTimeIndep} ensure both an energetically optimal counter-diabatic evolution and 
families of possible time-independent transitionless Hamiltonians. A rather important point for the 
generalized counter-diabatic theory is whether it is robust against decoherence. 
The robustness of the counter-diabatic dynamics and inverse engineering schemes has recently been
considered in the literature~\cite{Hao:16,Chen:16,Liu:17,Jing:13}. Here, in order to provide a 
comparison between adiabatic and generalized counter-diabatic dynamics, 
we will require {\it identical energy resources} for each implementation. 
More specifically, we will consider the performance of transitionless evolutions in open 
systems described by convolutionless master equations given by
\begin{equation}
d_s \rho \left( s\right) =-i \tau\left[ H_{\text{SA}}\left(
s\right) ,\rho \left( s\right) \right] +\tau \mathcal{L}_{i}\left[ \rho
\left( s\right) \right] \text{ ,} \label{parLindEq}
\end{equation}
where $\mathcal{L}_{i}\left[ \rho
\left( s\right) \right]$ describes the decohering contribution to the quantum dynamics, 
which is parametrized by the normalized time $s=t/\tau $, with $\tau$ the total time of 
evolution and $0 \le s \le 1$. For Markovian evolution~\cite{Lindblad:76,Petruccione:Book}, we have 
$\mathcal{L}_{i}\left[ \rho\left( s\right)\right] = \frac{1}{2} \sum\nolimits_{i}\gamma ^{2} _{i}\left( s\right) 
[ 2 L_{i}\left( s\right) \rho
\left( s\right) L_{i}^{\dag }\left( s\right) - \{
L_{i}^{\dag }\left( s\right) L_{i}\left( s\right) ,\rho
\left( s\right) \} ]$, with $L_{i}(s)$ denoting Lindblad operators and $\gamma_i(s)$ (positive) 
decoherence rates. Here, we will consider, as an illustration, Lindblad operators for generalized amplitude 
damping (GAD) in the eigenbasis of the Hamiltonian, which reads
\begin{eqnarray}
L_{\pm}^{\text{GAD}}\left( s\right) &=& U^{ \dagger }(s) \sigma_{\pm} U(s) \text{ ,} \label{Lind-GAD} 
\end{eqnarray}
where  $U(s)$ is the unitary operator that diagonalizes the Hamiltonian and 
$\sigma_{\pm} = (\sigma_{x} \mp i\sigma_{y})/2$, with $\{ \sigma_x, \sigma_y, \sigma_z \}$ 
denoting Pauli matrices. The GAD channel describes dissipation to an environment at 
finite temperature. Its decoherence rates $\gamma_+$ and $\gamma_-$ are given by~\cite{Srikanth:08,Cafaro:14} 
$\gamma_+ =  \sqrt{\gamma_0 N_{\text{th}}}$ and $\gamma_- =  \sqrt{\gamma_0 \left( N_{\text{th}} + 1 \right)}$, 
where $\gamma_0$ is the spontaneous emission rate and $N_{\text{th}}$ is the Planck distribution that gives 
the number of thermal photons at a fixed frequency. For simplicity, we adjust
the temperature such that $N_{\text{th}}=1/2$ and define $\gamma_0 \equiv \alpha \omega_r$, with 
$\alpha$ a dimensionless parameter and $\omega_r$ some relevant frequency associated with the 
quantum system. Then, we obtain 
$\gamma_{+} = \sqrt{\alpha \omega_{r}/2}$ and 
$\gamma_{-} = \sqrt{3}\gamma_{+}$. In addition to GAD, we will also consider  
dephasing in the instantaneous Hamiltonian eigenbasis, whose Lindblad operator reads 
\begin{equation}
L_{\text{d}}\left( s\right) = U^{ \dagger }(s) \sigma_{z} U(s) \text{ ,}
\end{equation}
with decoherence rate given by 
$\gamma_d \equiv \alpha \omega_r$. Both GAD and dephasing are common decohering 
processes in a number of physical realizations~\cite{Petruccione:Book}. They will be used here 
as probes to the counter-diabatic robustness in the open-system realm. 

In this paper we consider that any systematic error due to experimental deviations of fields used to implement the Hamiltonian is negligible. In general, finding an optimal transitionless scheme against arbitrary systematic errors is not a trivial task~\cite{Ruschhaupt:12}. In particular, given a fixed class of error, we can obtain an optimal transitionless dynamics for such a class, but the associated dynamics may be not robust against other classes of systematic errors~ \cite{Ruschhaupt:12,Tseng:14,X-Jing:13}.

\section{Transitionless dynamics in the Landau-Zener model}

As a first application, let us consider the dynamics of a 
two-level quantum system, i.e., a qubit, evolving under the Landau-Zener Hamiltonian 
\begin{equation}
H_0^{\text{LZ}}\left( s\right) = - \omega \left[ \sigma _{z}+\tan \vartheta \left( s\right) \sigma _{x}\right],
\label{H-LZ}
\end{equation}
with $\tan \vartheta \left( s\right)$ a dimensionless time-dependent parameter associated with Rabi frequency. This Hamiltonian describes transitions in two-level systems exhibiting anti-crossings 
in its energy spectrum~\cite{Zener:32}. In particular, it can be applied, e.g., to perform adiabatic population transfer in a two-level system driven by a chirped field~\cite{Demirplak:03} and to investigate molecular collision processes \cite{Lee:79}.
The instantaneous ground $|E_{-}\left( s\right) \rangle$ 
and first excited $|E_{+}\left( s\right) \rangle$ states of 
$H_0^{\text{LZ}}\left( s\right)$ are
\begin{eqnarray}
|E_{-}\left( s\right) \rangle  &=&\cos \left[ \frac{\vartheta \left(
	s\right) }{2}\right] |0\rangle +\sin \left[ \frac{\vartheta \left( s\right) 
}{2}\right] |1\rangle \text{ ,}  \label{FundLZ} \\
|E_{+}\left( s\right) \rangle  &=&-\sin \left[ \frac{\vartheta \left(
	s\right) }{2}\right] |0\rangle +\cos \left[ \frac{\vartheta \left( s\right) 
}{2}\right] |1\rangle \text{ .}  \label{ExcLZ}
\end{eqnarray}

The system is initialized in the ground state $|E_{-}\left( 0\right) \rangle =|0\rangle $ 
of $H_0^{\text{LZ}}\left( 0\right)$. By considering a unitary dynamics and a sufficiently 
large total evolution time (adiabatic time), the qubit evolves to the 
instantaneous ground state $|E_{-}\left( s\right) \rangle$ of $H_0^{\text{LZ}}\left( s\right)$. 

\subsection{Optimal transitionless evolution in Landau-Zener model}
In this section we will discuss the generalized transitionless dynamics theory for the Landau-Zener model. 
For optimal energy cost, Eq. (\ref{OptTeta}) establishes 
$\theta _{n}\left( t\right) =\langle d_{s}E_{\pm }\left( s\right)
|E_{\pm }\left( s\right) \rangle =0$ for the states in Eqs. (\ref{FundLZ}) and (\ref{ExcLZ}). 
Therefore, the optimal Hamiltonian ${H}_{\text{SA}}\left( s\right) $ is given by 
${H}_{\text{SA}}\left( s\right) =H_{\text{CD}}^{\text{LZ}}\left( s\right) $, with 
\begin{equation}
H_{\text{CD}}^{\text{LZ}}\left( s\right) = i \sum_{k=\pm }|d_{s}E_{k}\left( s\right)
\rangle \langle E_{k}\left( s\right) |=\frac{d_{s}\vartheta \left( s\right) 
}{2\tau }\sigma _{y} \text{ .}
\label{lz-1}
\end{equation}
From Eq.~(\ref{lz-1}), we can see that $H_0^{\text{LZ}}\left( s\right)$ satisfies the hypotheses of  
Theorem \ref{TheoTimeIndep} if, and only if, we choose the linear interpolation 
$\vartheta (s) = \vartheta _{0}s$. Thus, we adopt this choice 
for simplicity and, consequently, we 
have $H_{\text{SA}}\left( s\right) = (\vartheta _{0}/2\tau ) \sigma _{y}$.
We observe that a complete (avoided) level crossing picture for the Landau-Zener model 
is described by varying the parameter $\tan \vartheta \left( s\right)$ from $-\infty$ to $+\infty$. 
Here, we are taking a narrower range for $\tan \vartheta \left( s\right)$, which simplifies the 
description of the transitionless dynamics for the model.

\subsection{Energy cost for the Landau-Zener model}

Now we will be interested in the performance of the transitionless 
evolution with optimal energy resource so that we impose $\theta_{n}(s)$ as in the 
Theorem \ref{TheoOptmEner}. Thus, by considering adiabatic evolution through 
the Hamiltonian $H_0^{\text{LZ}}\left( s\right)$, Theorem \ref{TheoOptmEner} establishes that the optimal energy resource is 
performed by setting $\theta_{n}(s)=0$.
Considering the energy cost as provided by Eq.~(\ref{costGeneric}), we get
\begin{eqnarray}
\Sigma _{\text{Ad}}\left( \tau \right)  &=&\sqrt{2} |\omega | 
\int_{0}^{1}|\sec [\vartheta \left( s\right) ] |ds\text{ ,} \label{adLZcost} \\
{\Sigma}_{\text{SA}}\left( \tau \right)  &=&\int_{0}^{1}\frac{%
	| d_{s}\vartheta \left( s\right) |}{\sqrt{2}\tau }ds=\frac{| \vartheta \left(
	1\right) |}{\sqrt{2}\tau } \label{SuperOptLZcost} \text{ .}
\end{eqnarray}
where $\Sigma _{\text{Ad}}$ and ${\Sigma}_{\text{SA}}$ are the energy costs for 
the adiabatic and optimal shortcut Hamiltonians, 
respectively. Remarkably, from Eqs. (\ref{adLZcost}) and (\ref{SuperOptLZcost}), 
it follows that the energy cost for the counter-diabatic Landau-Zener model is independent 
of the path followed by the system on the Bloch sphere, while its adiabatic counterpart depends 
on it. In particular, for obtaining ${\Sigma}_{\text{SA}}(\tau)$, we have used  
$\tan \vartheta \left( 0\right) =0$ and, therefore, $\vartheta
\left( 0\right) =0$. 
Note that there is a range of values for $\tau$ for which the energy cost of the 
generalized transitionless dynamics is less than its adiabatic version. 
Indeed, by evaluating the relation between $\Sigma _{\text{Ad}}\left( \tau \right)$ and ${\Sigma}_{\text{SA}}\left( \tau \right) $ we get
\begin{equation}
{\mathcal{R}} \left( \tau \right) = \frac{\Sigma _{\text{Ad}}\left( \tau \right)}{{\Sigma}_{\text{SA}}\left( \tau \right)} = | \omega | \tau \frac{ 2 \int_{0}^{1}|\sec [\vartheta \left( s\right) ]|ds }{ |\vartheta \left(
	1\right) | } . \label{Rfunctiontilde}
\end{equation}
By imposing identical energy cost, i.e. ${\mathcal{R}} \left( \tau \right) = 1$, we obtain 
\begin{equation}
|\omega | \tau = \frac{ | \vartheta \left( 1 \right) | }{ 2 \int_{0}^{1} | \sec [\vartheta \left( s\right) ]|ds } \label{constraint-er}.
\end{equation}
Therefore, identical energy cost can be obtained by adjusting $\omega$ according to the total evolution 
time $\tau$, as in Eq.~(\ref{Rfunctiontilde}).

\begin{figure*}[t]
	\centering
	\subfloat[ ]{\includegraphics[scale=0.6]{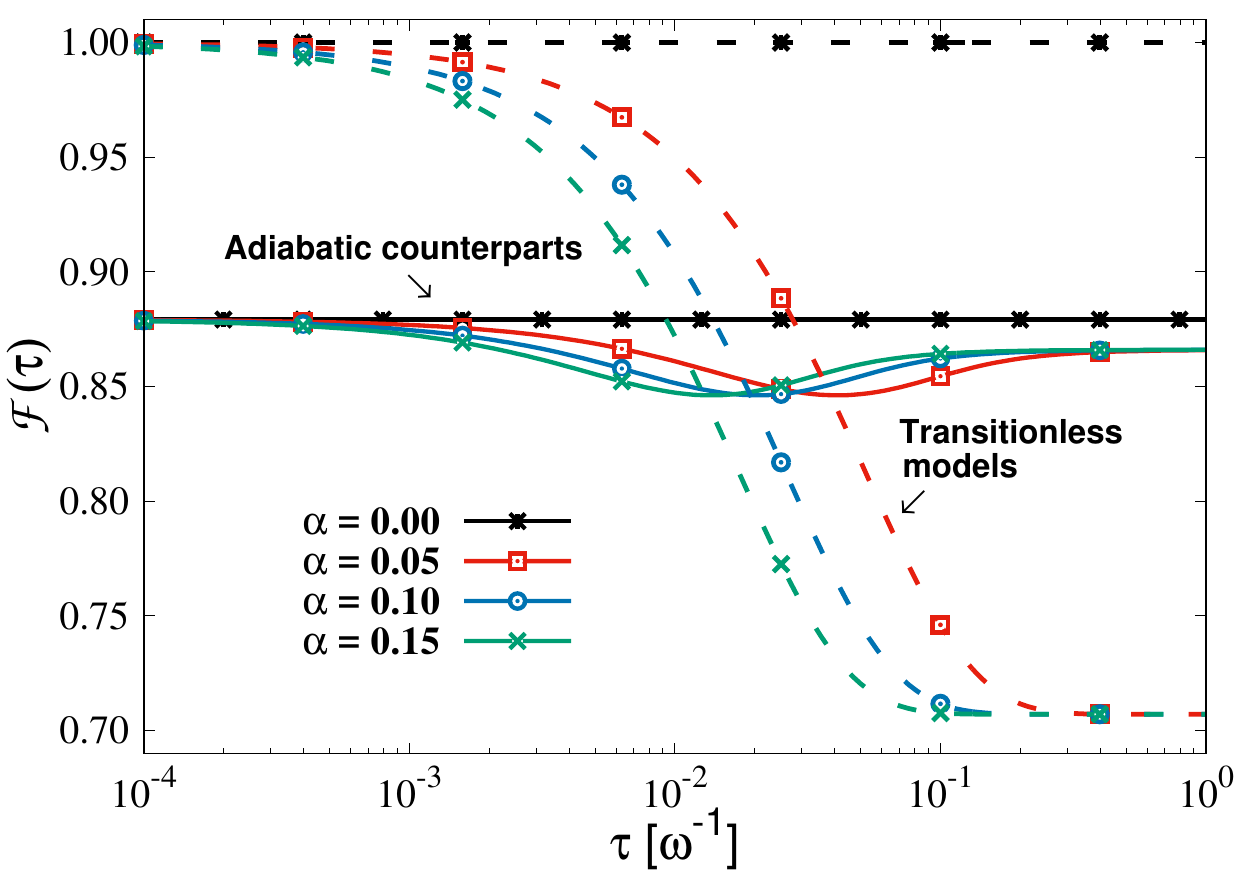} \label{Fig1-a}} \hspace{0.1cm} \subfloat[ ]{\includegraphics[scale=0.6]{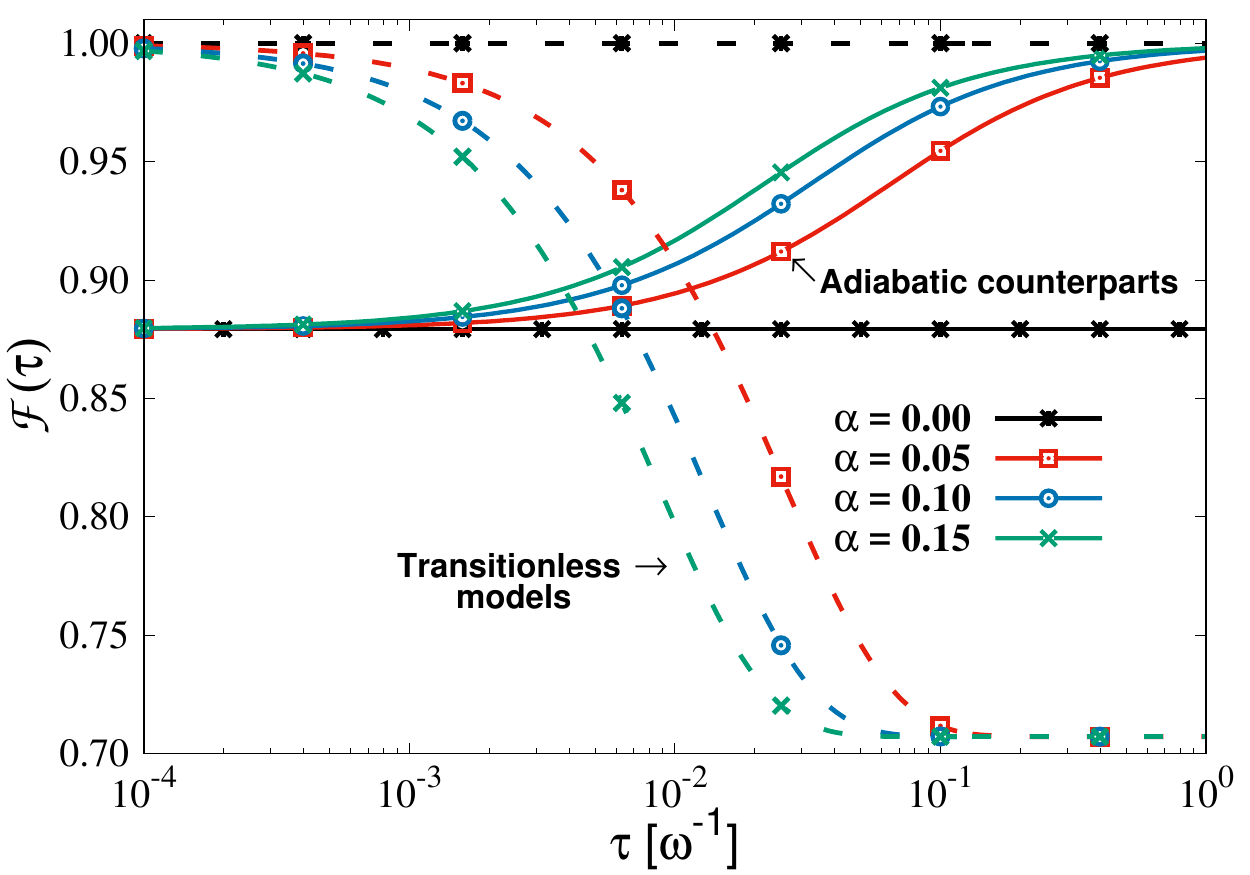}\label{Fig1-b}} 
	\quad
	\subfloat[ ]{\includegraphics[scale=0.6]{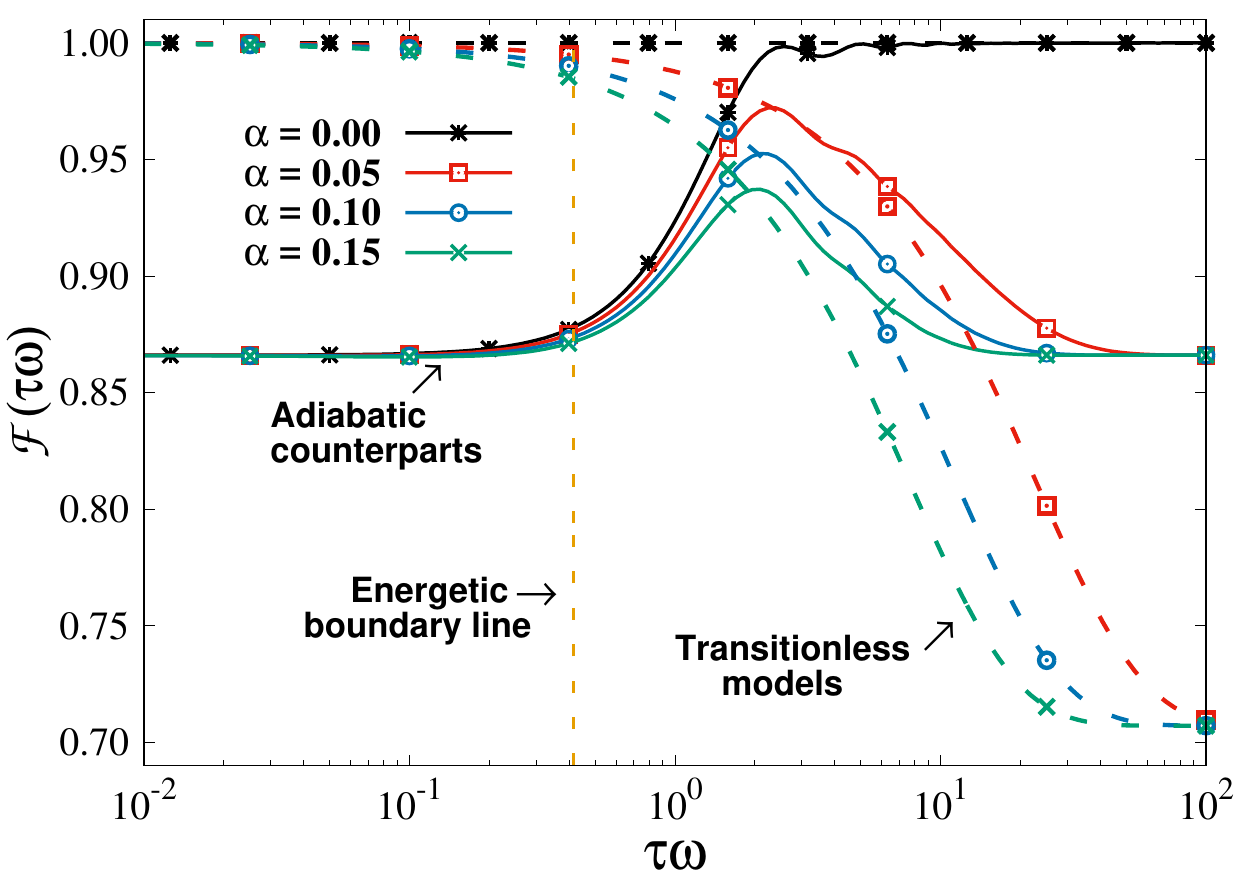}\label{Fig1-c}}\hspace{0.1cm} \subfloat[ ]{\includegraphics[scale=0.6]{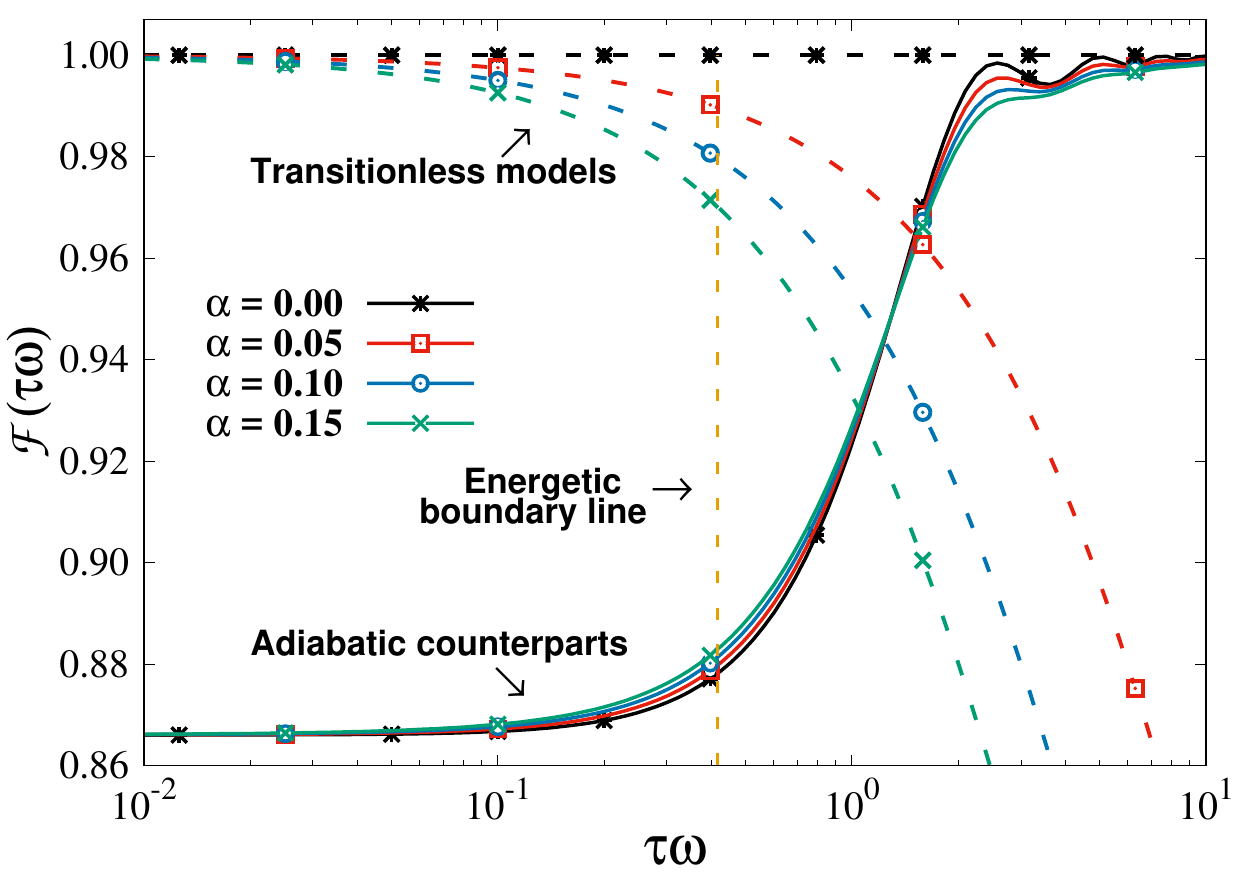}\label{Fig1-d}}
	\caption{(Color online) Fidelity $\cal{F}(\tau)$ under decoherence in the eigenstate basis for both adiabatic (solid curves) and optimal transitionless dynamics (dashed curves) 
		in the Landau-Zener model. Left column: Fidelity $\cal{F}(\tau)$ under GAD for (\ref{Fig1-a}) identical and (\ref{Fig1-c}) different energy resources. 
		Right column: Fidelity $\cal{F}(\tau)$ under dephasing for (\ref{Fig1-b}) identical and (\ref{Fig1-d}) different energy resources. 
		The vertical dashed line in  (\ref{Fig1-c}) and (\ref{Fig1-d}) represents the boundary line between
		$\Sigma _{\text{Ad}}\left( \tau \right) \leqslant {\Sigma}_{\text{SA}}\left( \tau \right)$ and 
		$\Sigma _{\text{Ad}}\left( \tau \right) \geqslant {\Sigma}_{\text{SA}}\left( \tau \right)$. We set $\vartheta_{0} = \pi / 3$.}
	\label{LZ-Transit}
\end{figure*}

\subsection{Robustness against decoherence in the Landau-Zener model}

We are now ready to compare the behavior under decoherence of both transitionless 
and adiabatic models. The system is prepared in the ground state $|E_{-}\left( 0\right) \rangle =|0\rangle $ 
of the Landau-Zener Hamiltonian $H_0^{\text{LZ}}\left( s\right) $ at $s=0$. Then, we 
let the system evolve aiming at the target state $|E_{-}\left( 1 \right) \rangle$. 
We adopt the fidelity 
$\mathcal{F}\left( \tau \right) =\sqrt{ {\langle E_{-}\left( 1\right) |\rho\left( 1\right) |E_{-}\left( 1\right) \rangle }}$ 
as a success measure of each protocol, with $\rho \left(1\right) $ denoting the solution of 
Eq. (\ref{parLindEq}) at $s=1$. To settle the problem in a fair scenario, we shall submit both models to the same requirements of energy cost and total evolution time $\tau$. The robustness of adiabatic and optimal transitionless 
evolutions under GAD and dephasing, for the same and different energetic resources, are shown in Fig.~\ref{LZ-Transit}. To both situations the decoherence rate strength is controlled by the dimensionless parameter $\alpha$. For equal energy resource provided for adiabatic and 
generalized transitionless evolutions, $\omega_r$ will be taken as follows. We consider a set 
$\{ \tau_{i} \, | \, 1 \le i \le n \}$ of total evolution times. 
The total time $\tau_i$ fixes the energy of the generalized transitionless evolution, with faster evolutions related to shorter times. 
For a given $\tau_i$, we adjust the corresponding frequency $\omega_i$ of the Hamiltonian that drives 
the adiabatic evolution so that $\Sigma_{\text{Ad}} (\tau_i) = \Sigma_{\text{SA}} (\tau_i)$, with $\Sigma_{\text{Ad}}(\tau_i)$ 
denoting the energy cost of the adiabatic model. The relevant frequency $\omega_r$ that sets the 
decoherence rates $\gamma_{\pm}$ will then be defined by the average of $\omega _{i}$  for the values 
of $\tau_i$ considered. More specifically, 
${\omega} _{r} \equiv \frac{1}{n} \sum \nolimits _{i=1}^{n} \omega _{i} $, with $n=200$ in our numerical 
treatment.

By considering the situation of identical energy cost [see Figs.~(\ref{Fig1-a}) and~(\ref{Fig1-b})], the fidelity for unitary 
dynamics ($\alpha = 0$) in the adiabatic model is constant and smaller than one. This is because 
of the requirement of fixed energy given by Eq.~(\ref{constraint-er}), which imposes a fixed relationship 
between $\tau$ and $\omega$. The relation between $\tau$ and $\omega$ keeps the adiabatic condition 
unchanged as we increase $\tau$, since we will have to decrease $\omega$ at the same pace. 
On the other hand, transitionless evolutions have fidelity close to $1$, since they are {\it not} ruled 
by the adiabatic constraint. For non-unitary evolutions ($\alpha > 0$), generalized transitionless 
evolutions are more robust than their adiabatic counterparts for any value of $\alpha$ within a range of values 
for $\tau$. Note also that, for the GAD channel in Fig.~(\ref{Fig1-a}), fidelity decreases for 
intermediate times due to the population of excited states in a thermal environment and then is 
favored for long times due to the spontaneous emission effect in the energy eigenbasis. 
In particular, it approximates to the adiabatic fidelity for closed systems for $\tau \rightarrow \infty$. 
Remarkably, the fidelity of the adiabatic curves increases under 
dephasing in the eigenstate basis, as shown in Fig.~(\ref{Fig1-b}). 
For this specific case, this occurs due to the fact that the ground 
eigenprojection $|E_{-}\left( s\right) \rangle \langle E_{-}\left( s\right) |$ is an eigenstate of the Lindblad 
superoperator, which governs the adiabatic approximation in open quantum 
systems~\cite{Sarandy:05,Sarandy:05-2}. Since adiabaticity is governed by the eigenvalue scale of the 
Lindblad superoperator instead of the Hamiltonian eigenvalue scale, Eq.~(\ref{constraint-er}) does not 
prevent the increase of the adiabatic fidelity as it happens in the closed case. Indeed, decoherence 
enhances adiabaticity in this situation. 

Similar results are also shown in Figs.~(\ref{LZ-Transit}{\color{blue}c}) and (\ref{LZ-Transit}{\color{blue}d}), where 
we allow for different resource contents. For this case, the relevant frequency is simply 
adopted as $\omega_r \equiv \omega$.
Observe that the behavior of the fidelity curve 
on the right and left hand side of the vertical line shows that, even for more 
energy provided for the adiabatic model, the transitionless dynamics can be more robust 
than the adiabatic dynamics for a fixed $\alpha$. Therefore, generalized 
transitionless evolutions can be more robust in a real open-system scenario even in situations 
for which the adiabatic implementation has more energy resource available.
For all situations considered in Fig.~\ref{LZ-Transit}, the crossing points delimit the supremacy region of the optimal transitionless dynamics. This region depends of the coupling strength between the qubit and its reservoir 
(as measured by the parameter $\alpha$). Therefore, in general, the advantage of the optimal 
transitionless evolution is a non-trivial problem, which depends on both the 
decoherence channel and the coupling strength with the reservoir.

We observe that the generalized counter-diabatic theory can be shown to be more 
robust than its adiabatic counterpart for different values of temperature, with the 
the choice $N_{\text{th}}=1/2$ just accounting for a simple numerical instance. In particular, 
the plots for each value of the parameter $\alpha$ in Fig.~\ref{LZ-Transit} already indicate 
that the advantage holds for distinct temperature regimes. 
More specifically, provided the expression for the parameters $\gamma_{+}$  and  $\gamma_{-}$
in terms both of the decoherence rate $\gamma_{0}$ and the temperature (which 
is implicit in $N_{\text{th}}$), we can think of the different values for the 
parameter $\alpha$ either as a change in the decoherence rate $\gamma_{0}$
(keeping $N_{\text{th}}$ fixed) or as a change in the temperature parameter $N_{\text{th}}$ 
(keeping $\gamma_0$ fixed). Therefore, different values of $\alpha$ can be 
taken as yielded by a change in the temperature of the bath.

\section{Transitionless dynamics in the counter-diabatic gate model}

Shortcuts to adiabaticity can be used to 
speed up adiabatic quantum gates. More specifically, they have been applied to perform 
universal quantum computation (QC) via either counter-diabatic controlled evolutions~\cite{SciRep:15} or 
counter-diabatic quantum teleportation~\cite{PRA:16}. As hybrid models, these approaches provide a convenient 
digital architecture for physical realizations while potentially keeping both the generality and some inherently 
robustness of analog implementations. Experimentally, digitized implementations of quantum annealing processes  
have been recently provided~\cite{Barends:16}, with controlled quantum gates adiabatically 
realized with high fidelity via superconducting qubits~\cite{Martinis:14}.  
In this Section, by focusing on controlled evolutions, we will now show that 
counter-diabatic QC can be more robust against decoherence than its adiabatic 
counterpart as long as the gate runtime is suitably determined within a range of evolution times.

\subsection{Adiabatic and counter-diabatic controlled quantum dynamics}

Consider a bipartite system composed by a target subsystem ${\cal T}$ and an auxiliary subsystem ${\cal A}$, whose 
individual Hilbert spaces ${\cal H}_{\cal T}$ and ${\cal H}_{\cal A}$ have dimensions $d_{\cal T}$ and $d_{\cal A}$, respectively. 
The auxiliary subsystem ${\cal A}$ will be driven by a family of time-dependent Hamiltonians $\{H_{k}\left( s\right)\}$, with $0 \le k \le d_{\cal T}-1$. 
The target subsystem will be evolved by a complete set $\left\{ P_{k}\right\}$ of orthogonal
projectors over ${\cal T}$, which satisfy $P_{k}P_{m}=\delta _{km}P_{k}
$ and $\sum_{k}P_{k}=\1$. In a controlled adiabatic dynamics, 
the composite system ${\cal TA}$ will be governed by a Hamiltonian in the form~\cite{Hen:15} 
\begin{equation}
H\left( s\right) =\sum\nolimits_{k}P_{k}\otimes H_{k}\left( s\right), 
\label{sce.1.2}
\end{equation}
with 
$H_{k}\left( s\right) =g\left( s\right) H_{k}^{\left( f\right)}+f\left( s\right) H^{\left( b\right) }$, 
where $H^{(b)}$ is the beginning Hamiltonian, $H^{(f)}_k$ is the contribution $k$ to the final Hamiltonian, and the time-dependent 
functions $f(s)$ and $g(s)$ satisfy the 
boundary conditions $f\left( 0\right) =g\left( 1\right) =1$ and $g\left( 0\right) =f\left(1\right) =0$.

Suppose now we prepare ${\cal TA}$ in the initial state 
$\left\vert \Psi _{\text{init}}\right\rangle =\left\vert \psi\right\rangle \otimes \left\vert \varepsilon _{b}\right\rangle$, where 
$\left\vert \psi\right\rangle$ is an arbitrary state of ${\cal T}$ and 
$\left\vert \varepsilon _{b}\right\rangle$ is the (non-degenerate) ground state of $H^{\left( b\right) }$. Then $\left\vert \Psi _{\text{init}}\right\rangle$ 
is the ground state of the initial Hamiltonian $\mathbbm{1}\otimes H^{\left( b\right) }$. By applying the adiabatic theorem~\cite{Messiah:book,Sarandy:04} , 
a sufficiently slowing-varying evolution of $H(t)$ will drive the system (up to a phase) to the final state 
\begin{equation}
\left\vert \Psi_{\text{final}}\right\rangle =\sum\nolimits_{k}P_{k}\left\vert \psi \right\rangle \otimes \left\vert \varepsilon _{k}\right\rangle,
\label{psi-final}
\end{equation}
where $\left\vert \varepsilon_{k}\right\rangle$ is the ground state of $H_{k}^{\left( f\right) }$~\cite{Hen:15}. 
Note that an arbitrary projection $P_k$ over the \textit{unknown} state $\left\vert \psi \right\rangle$ can be yielded by performing a convenient measurement over ${\cal A}$. 
In particular, as will be shown in Subsection~\ref{iv.b}, by suitably designing the auxiliary Hamiltonians $H_{k}\left( s\right)$, such a dynamics can be used to 
adiabatically implement individual quantum gates. 

The counter-diabatic version of this controlled evolution has been built in Ref. \cite{SciRep:15}, where it is shown that the transitionless 
Hamiltonian for the composite system ${\cal TA}$ reads
\begin{equation}
H_{\text{SA}}\left( s\right) =\sum\nolimits_{k}P_{k}\otimes H_{\text{SA},k}\left(
s\right) ,  \label{sce1.7}
\end{equation}%
where $H_{\text{SA},k}\left( s\right)$ is the piecewise Hamiltonian implementing the shortcut to adiabaticity for the controlled dynamics.

\subsection{Quantum computation via adiabatic controlled evolutions}
\label{iv.b}

Universal sets of quantum gates can be implemented through a bipartite system $\mathcal{TA}$ 
composed by a target subsystem $\mathcal{T}$ and a single-qubit 
auxiliary system $\mathcal{A}$. In our protocol, the target system works as our quantum processor, with any computation performed on it. In others words, both the input and output state, as well as any intermediate stage of the computation, should be encoded in the target system. For example, the target \textit{system} for a two-qubit gate is composed by the target \textit{qubit} and the control \textit{qubit}. On the other hand, the auxiliary qubit works as an \textit{ancilla qubit}. Any result of measurements over such a qubit is not relevant for the computation result, but it is important for determining whether or not the computation has been successfully realized. Differently from the target system, any information encoded in the \textit{auxiliary} system may be deleted after the measure. Therefore, quantum gates will be applied to the target subsystem, 
as a result of a measurement performed on the auxiliary qubit.

Let us begin by considering $\mathcal{T}$ as a single qubit 
and a single-qubit gate as an arbitrary 
rotation of angle $\phi$ around a direction $\hat{n}$ over the Bloch 
sphere. Under this consideration, the Hamiltonian that adiabatically implements such a single-qubit gate for an arbitrary input state 
$| \psi \ket = a |0 \ket + b |1 \ket$, with $a,b \in \mathbb{C}$, is given by \cite{Hen:15}
\begin{eqnarray}
H_{\text{sg}}\left( s\right) &=& P_{+}\otimes H_{0}\left(
s\right) + P_{-} \otimes H_{\phi}\left( s\right) \text{ ,} \label{HAdsg}
\end{eqnarray}%
where $\{ P_{ \pm } \}$ is a complete set of orthogonal projectors over the Hilbert space of 
the target qubit. 
The projectors can be parametrized as $P_{ \pm } = ( \1 \pm \hat{n} \cdot \vec{\sigma} ) / 2$, with 
$\hat{n}$ associated with the direction of the target qubit on the Bloch sphere. In Eq.~(\ref{HAdsg}), each Hamiltonian $H_{\xi}\left(s\right)$ 
($\xi = \{ 0, \phi \}$) acts on $\A$, and is given by~\cite{Hen:15}
\begin{equation}
H_{\xi }\left( s\right) =-\omega \{ \sigma _{z}\cos (\varphi
_{0}s)+\sin (\varphi _{0}s) [ \sigma _{x}\cos \xi +\sigma _{y}\sin \xi %
] \} \text{ , }
\label{Hphy}
\end{equation}
with $\varphi _{0}$ denoting an arbitrary parameter that sets the success probability 
of obtaining the desired state at the end of the evolution. This parameter plays a 
role in the energy performance of counter-diabatic QC, with probabilistic 
counter-diabatic QC ($ \varphi _{0} \neq \pi $) being energetically more favorable than its 
deterministic ($ \varphi _{0} = \pi $) counterpart~\cite{Front:16}.
\begin{figure}[t]
	\centering
	\includegraphics[width=0.25\textwidth]{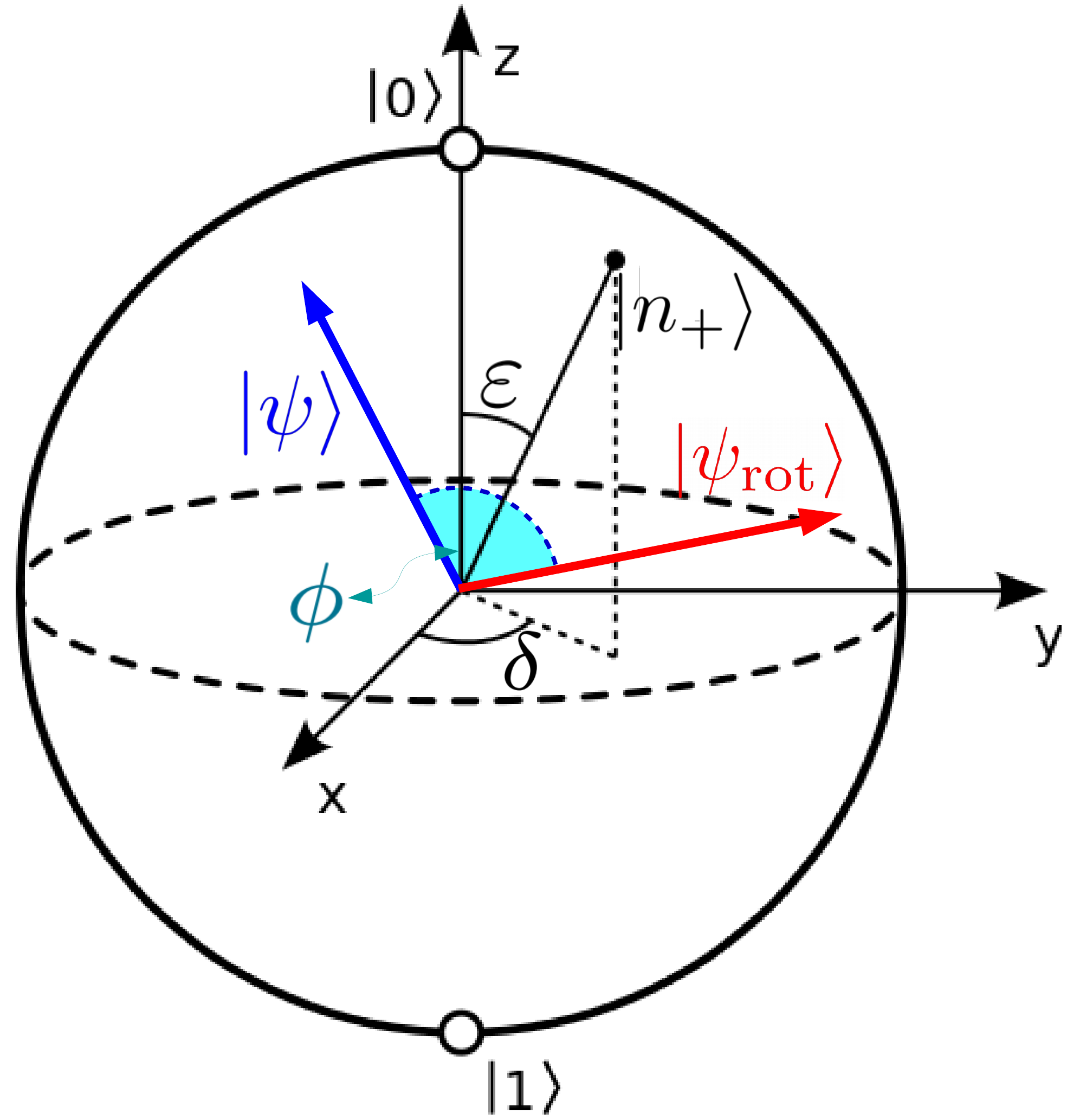}
	\caption{(Color online) Geometric representation of an arbitrary single qubit gate implemented through an adiabatic controlled 
		evolution. Information about the quantum gate to be implemented is encoded in the angles $\varepsilon$ and $\delta$ that set the vector $|n_{+}\ket$ and in the angle $\phi$ that sets the Hamiltonian in Eq. (\ref{Hphy}).}
	\label{GateRep}
\end{figure}
The projectors $\{ P_{ \pm } \}$ may be written in terms of two basis vectors $\{ | n_{\pm} \ket \}$ in the Bloch sphere as $\{ P_{ \pm } \} = | n_{\pm} \ket \bra n_{\pm} |$, where
\begin{eqnarray}
| n_{+} \ket &=& \cos (\varepsilon/2) | 0 \ket + e^{i\delta} \sin (\varepsilon/2) | 1 \ket \label{nmais} \\
| n_{-} \ket &=& -\sin (\varepsilon/2) | 0 \ket + e^{i\delta} \cos (\varepsilon/2) | 1 \ket \label{nmenos} \text{ .}
\end{eqnarray}
Thus, a quantum gate is encoded as a rotation of $\phi$ around the vector $| n_{+} \ket$, as shown in the Fig. \ref{GateRep}. Now, by expressing the state $| \psi \ket$ in the basis $\{ | n_{\pm} \ket \}$, we write 
$| \psi \ket =  \alpha | n_{+} \ket + \beta | n_{-} \ket$, with $\left\vert \hat{n}_{\pm }\right\rangle$ 
being a state in the direction $\hat{n}$ and $\alpha,\beta \in \mathbb{C}$. 
We then prepare the system in the initial state $| \Psi (0) \ket = | \psi \ket | 0 \ket$. Then, 
assuming an adiabatic dynamics, the evolved state $| \Psi (s) \ket $ 
is given by the superposition 
\begin{eqnarray}
| \Psi (s) \ket &=& \alpha | n_{+} \ket | E_{-,0} (s) \ket + \beta | n_{-} \ket | E_{-,\phi} (s) \ket \nonumber \\
&=& \cos \left( \frac{\varphi_{0} s}{2} \right) | \psi \ket | 0 \ket + \sin \left( \frac{\varphi_{0} s}{2} \right) | \psi_{\text{rot}} \ket | 1 \ket \text{ ,} \label{RefereeA1}
\end{eqnarray}
with $| \psi_{\text{rot}} \ket = \alpha | n_{+} \ket + e^{i \phi} \beta | n_{-} \ket$ being the rotated 
desired state and the ground $| E_{-,\xi} (s) \ket$ and first excited $| E_{+,\xi} (s) \ket $ states 
of $H_{\xi }\left( s\right)$ given by
\begin{eqnarray}
| E_{-,\xi} (s) \ket  &=& \cos ( \varphi_{0} s / 2 ) | 0 \ket  + e^{i \xi } \sin ( \varphi_{0} s / 2 ) | 1 \ket \text{ ,} \label{Fund} \\
| E_{+,\xi} (s) \ket  &=& - \sin ( \varphi_{0} s / 2 ) | 0 \ket + e^{i \xi } \cos ( \varphi_{0} s / 2 ) | 1 \ket \text{ .} \label{Exc} 
\end{eqnarray}

We observe that, due to the dynamics of the auxiliary qubit through two adiabatic paths, there are quantum phases 
$\vartheta_{0}(s)$ and $\vartheta_{\phi}(s)$ accompanying the evolutions associated with $| E_{-,0} (s) \ket$ and $| E_{-,\phi} (s) \ket$, respectively. 
Then, relative phases should in principle be considered in Eq. (\ref{RefereeA1}). However, as shown in the Ref. \cite{Hen:15}, such phases satisfy 
$\vartheta_{0}(s) = \vartheta_{\phi}(s)$. Thus, they factorize as a global phase of the state $| \Psi (s) \ket$. 
At the end of the evolution, a measurement on the auxiliary qubit 
yields the rotated state with probability $\sin ^2 \left( \varphi_{0} /2 \right)$ and the input state 
with probability   $\cos ^2 \left( \varphi_{0} /2 \right)$. The computation process is therefore 
\textit{probabilistic}, which succeeds if the auxiliary qubit ends up in the state $|1\ket$. Otherwise, 
the target system automatically returns to the input state and we simply restart the protocol. 
In the adiabatic scenario, the parameter $\varphi_{0}$ can then be adjusted in order to obtain the 
optimal fidelity $1$ by taking the limit $\varphi_{0} \rightarrow \pi$, implying in a deterministic 
computation. 

\begin{figure}[t]
\centering
	{\includegraphics[height=3.5cm]{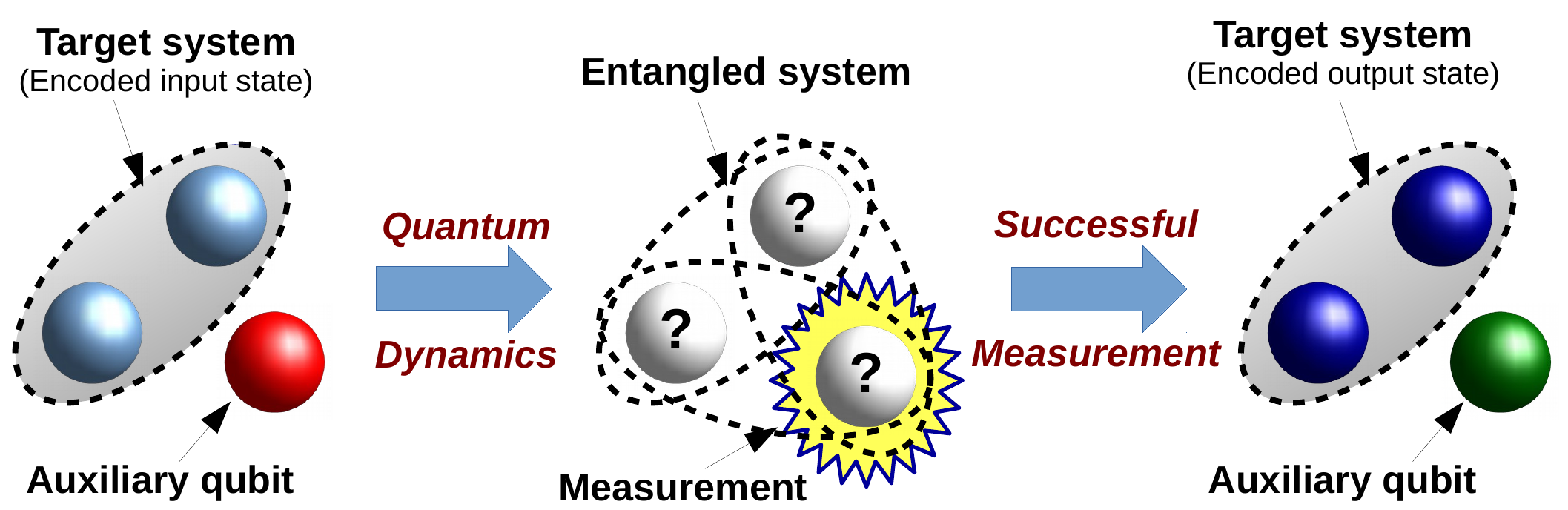} \label{SchemeSI}} 
	\caption{(Color online) Protocol for a probabilistic implementation of a controlled evolution in a two-qubit state. 
		Before the quantum evolution (either adiabatic or nonadiabatic) the input state is encoded in the 
		target system. After the evolution, a measurement (in computational basis) is performed on the 
		auxiliary qubit. A successful measurement corresponds to $| 1 \ket$. If the result is $| 0 \ket$, 
		the system returns to its initial state and a repetition of the process is required (until $| 1 \ket$ 
		is obtained as a result of the measurement).}
	\label{fig-sch}
\end{figure}

This model can be easily adapted to implement controlled single-qubit gates. To this end, the target system 
has to be increased from one qubit to two qubits, as shown in the scheme provided in 
Fig.~\ref{fig-sch}. Here, we adopt that the single-qubit gate 
acts on the target register if the state of the control register is $| 1 \ket$. With this convention, 
the Hamiltonian that implements a controlled single-qubit gate is given by
\begin{eqnarray}
H_{\text{cg}}\left( s\right) &=& (\1 - P_{1,-})\otimes H_{0}\left( s\right) + P_{1,-} \otimes H_{\phi}\left( s\right) \text{ ,} \label{HAdcg}
\end{eqnarray}
where now the set the orthogonal projectors is given by 
$P_{k,\pm} = |k \ket \bra k| \otimes | n_{\pm} \ket \bra n_{\pm} | $, where $|k \ket$ denotes the 
computational basis. The input state of the target system is now written as 
$| \psi_{2} \ket = a | 00 \ket + b | 01 \ket + c | 10 \ket + d | 11 \ket $, with $a,b,c,d \in \mathbb{C}$ and 
$| nm \ket=| n \ket | m \ket$ denoting the control and target register, respectively. 
By rewriting $| \psi_{2} \ket $ in terms of the basis $| n_{\pm} \ket$, we have 
$| \psi_{2} \ket = \alpha | 0 n_{+} \ket + \beta | 0 n_{-} \ket + \gamma | 1 n_{+} \ket + \delta | 1 n_{-} \ket $, 
with $\alpha,\beta,\gamma,\delta \in \mathbb{C}$. 
Therefore, by assuming adiabatic evolution, the system evolves from the state 
$| \Psi_{2}(0) \ket = | \psi_{2} \ket | 0 \ket$, to the instantaneous state
\begin{eqnarray}
| \Psi (s) \ket &=& \alpha | 0 n_{+} \ket | E_{-,0} (s) \ket + \beta | 0 n_{-} \ket | E_{-,0} (s) \ket  \nonumber \\ 
&& + \gamma | 1 n_{+} \ket | E_{-,0} (s) \ket + \delta | 1 n_{-} \ket | E_{-,\phi} (s) \ket \nonumber \\ 
&=& \cos \left( \frac{\varphi _{0} s}{2} \right) | \psi_{2} \ket | 0 \ket + \sin \left( \frac{\varphi_{0} s}{2} \right) | \psi_{2\text{rot}} \ket | 1 \ket 
\end{eqnarray}
with $| \psi_{2\text{rot}} \ket = \alpha | 0 n_{+} \ket + \beta | 0 n_{-} \ket + \gamma | 1 n_{+} \ket + e^{i \phi} \delta | 1 n_{-} \ket $ being the rotated desired state. We then see that the final state $| \Psi (1) \ket $ allows for a probabilistic interpretation for the evolution and, consequently, the computation protocol can again be taken as 
probabilistic ($\varphi_0 \neq \pi$) or deterministic ($\varphi_0 = \pi$).

\subsection{Quantum computation via counter-diabatic controlled evolutions}

Let us now provide energetically optimal shortcuts to the adiabatic controlled dynamics previously introduced.
The transitionless evolution for the quantum gate Hamiltonian $H_{\text{sg}}\left( s\right)$ is based on the Hamiltonian $H_{\xi }\left( s\right)$ as in  
Eq.~(\ref{Hphy}), such that for single-qubit and controlled single-qubit gates we have~\cite{SciRep:15}
\begin{eqnarray}
H^{\text{SA}}_{\text{sg}} &=& P_{+}\otimes H_{\text{SA},0} + P_{-} \otimes H_{\text{SA},\phi} \text{ ,} \label{HSAdsg}
\\
H^{\text{SA}}_{\text{cg}} &=& (\1 - P_{1,-})\otimes H_{\text{SA},0} + P_{1,-} \otimes H_{\text{SA},\phi} \text{ ,} \label{HSAdcg}
\end{eqnarray}
respectively. Remarkably, each 
Hamiltonian $H_{\xi}\left(s\right)$ satisfies the conditions required by Theorems \ref{TheoOptmEner} 
and \ref{TheoTimeIndep}, so that we can obtain an optimal time-independent  
Hamiltonian.
Thus, the generalized Hamiltonians associated with $H _{\xi} \left( s\right)$ for transitionless dynamics can be directly 
derived from Eq.~(\ref{HSA}). Notice that, from Eq. (\ref{Fund}) and (\ref{Exc}), it is possible show that Theorem \ref{TheoTimeIndep} holds, which implies in
\begin{eqnarray}
H_{\text{SA},\xi} &=& \frac{1}{\tau}\sum_{k=\pm }|d_{s}E_{k,\xi}\left( s\right)
\rangle \langle E_{k}\left( s\right) | \nonumber \\
&=&   \frac{\varphi _{0}}{2\tau}\left[ \sigma _{y}\cos \xi - \sigma
_{x}\sin \xi \right] \label{HSAxi}
\end{eqnarray}%
where we have used that $\bra E_{k,\xi}\left( s\right) | d_{s}E_{k,\xi}\left( s\right) \rangle=0$, with $k \in \{ +,-\}$. The
Hamiltonian in Eq.~(\ref{HSAxi}) improves the gate Hamiltonian derived in Ref.~\cite{SciRep:15}. More specifically, 
Eq.~(\ref{HSAxi}) is energetically optimal and given by a time-independent operator. 

\begin{figure*}[th!]
	\subfloat[ ]{\includegraphics[scale=0.6]{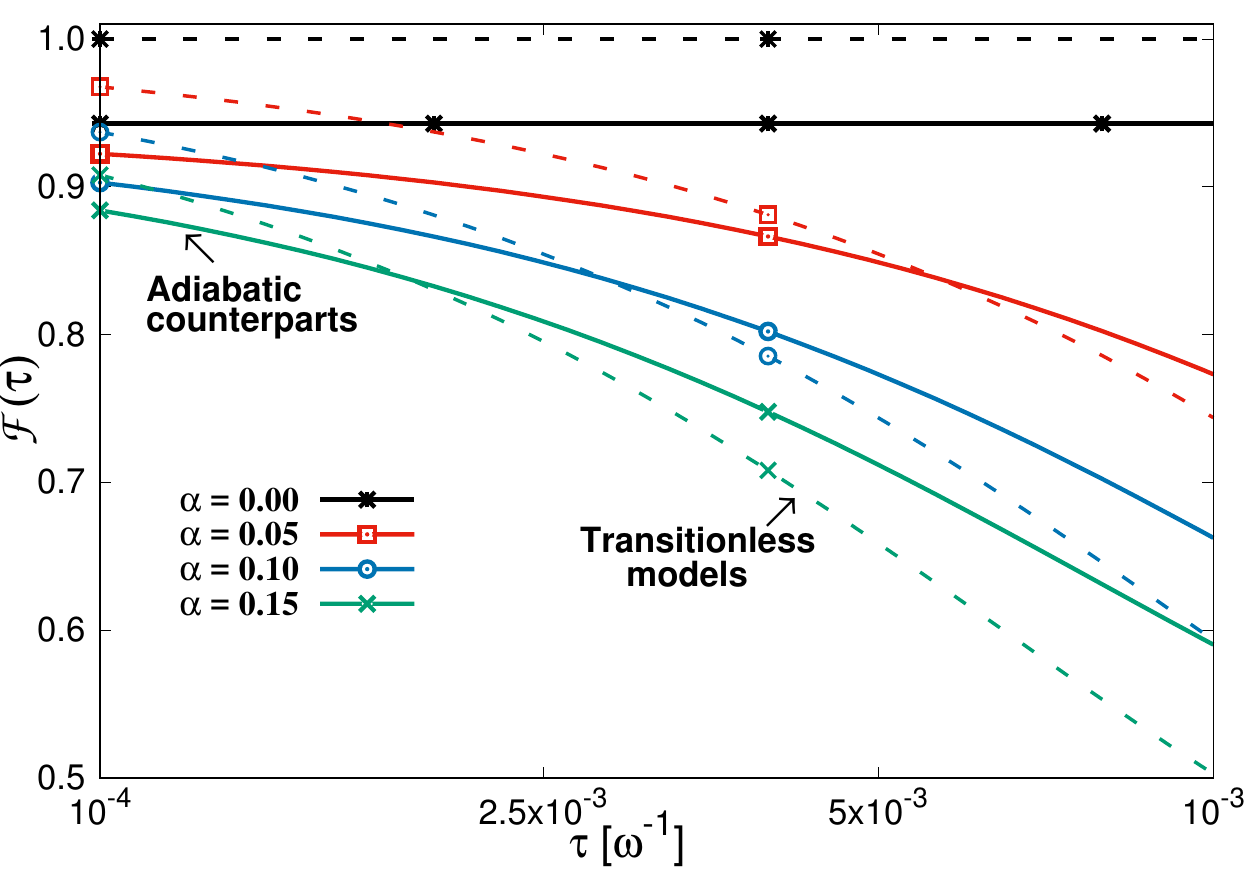}\label{Fig4-a}}\hspace{0.2cm}
	\subfloat[ ]{\includegraphics[scale=0.6]{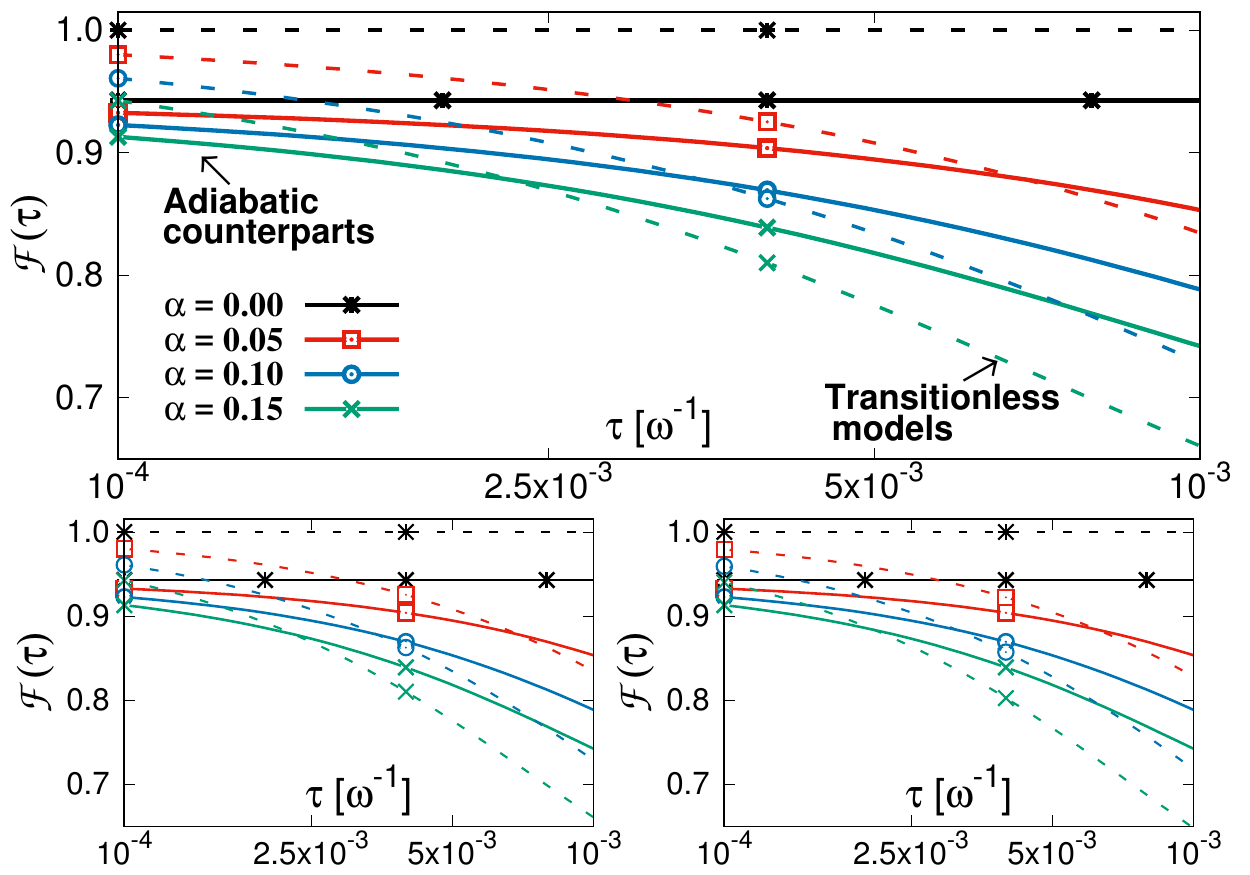}\label{Fig4-b}}
	\caption{(Color online) Fidelity ${\cal F}(\tau)$ for the implementation of (\ref{Fig4-a}) a CNOT gate to the state $| + \ket| 0 \ket$ and (\ref{Fig4-b}) single qubit gates, provided by (\textit{top}) a Hadamard gate to the state $| 0 \ket$, (\textit{bottom left}) a phase gate to the state $| + \ket$, and (\textit{bottom right}) a $\frac{\pi}{8}$-gate to the state $| + \ket$. The gates are implemented via deterministic 
		($\varphi _{0} = \pi$) adiabatic QC (solid curves) and probabilistic ($\varphi _{0} \approx 0.742 \pi$) 
		counter-diabatic QC (dashed curves), for unitary and non-unitary evolutions under dephasing 
		for identical energy resources.  }
	\label{HenProb-ER-Deph}
\end{figure*}
\vspace{0.3cm}
\subsection{Energy cost for the quantum gate Hamiltonian}
For a transitionless evolution, it is possible to show that a probabilistic 
process, with $\varphi_0 \ne \pi$,  is energetically better than the deterministic approach $\varphi_0 = \pi$~\cite{Front:16}.
For this reason, we will consider here the probabilistic model for the generalized counter-diabatic
quantum gates. 
In this scenario, given a fixed amount of energy resource available, our aim is to compare the best
adiabatic protocol to implement quantum gates with its best generalized transitionless
counterpart. From Eq. (\ref{HSAdsg}), we can write the energy cost of a single evolution to implement single-qubit 
gates 
\begin{equation}
{\Sigma} _{\text{SA,sg}} \left(\tau ,\varphi _{0}\right) = 
\frac{\varphi _{0}}{\omega \tau} \Sigma _{\text{sg}} \text{ ,}
\end{equation}
where $\Sigma _{\text{sg}} = 2 \omega $ corresponds to the adiabatic energy cost $\Sigma _{\text{Ad,sg}}\left( \tau \right)$ and $\varphi _{0}$ is the free angle parameter. The energy cost of probabilistic optimal transitionless evolutions can be obtained by defining the quantity%
\begin{equation}
\langle N \rangle \equiv \frac{1}{\sin ^{2}\left( \varphi _{0}/2\right) } \text{ \ ,}
\label{MediaN}
\end{equation}%
which is the average number of evolutions for a successful computation. Thus, the average energy cost to implement a probabilistic evolution is~\cite{Front:16}
\begin{eqnarray}
\bar{\Sigma}_{\text{SA,sg}}\left(\tau ,\varphi _{0}\right)&=&\langle N \rangle {\Sigma} _{\text{SA,sg}} \left(\tau ,\varphi _{0}\right) \nonumber \\ 
&=& \frac{\varphi _{0}}{\omega \tau} \csc ^2 \left(\varphi _{0} / 2 \right) \Sigma _{\text{sg}} \text{ .}
\label{costsg}
\end{eqnarray}  
Hence, an optimal scheme requires that the choice of $\varphi _{0}$  is such that $\bar{\Sigma}_{\text{SA,sg}}\left(\tau ,\varphi _{0}\right)$ is minimized. In particular, this minimization is obtained for $\varphi _{0} \approx 0.742 \pi$. It is important mention that the energy cost in Eq.~(\ref{costsg}) is obtained by two processes: i) energy minimization through the quantum phase $\theta_n(t)$ that accompanies the transitionless evolution and  ii) application of the probabilistic model of quantum gates. 

By considering the energy rate ${\mathcal{R}}\left(\tau ,\varphi _{0}\right)$ for adiabatic and generalized transitionless protocols [similarly as in Eq.~(\ref{Rfunctiontilde})], we have
\begin{equation}
{\mathcal{R}} \left(\tau ,\varphi _{0}\right) = \frac{\Sigma _{\text{Ad,sg}}\left( \tau \right)}{\bar{\Sigma} _{\text{SA,sg}}\left(\tau ,\varphi _{0}\right)}  = 
\frac{\omega \tau}{\varphi _{0}} \sin ^2 \left( \varphi _{0} / 2 \right) \text{ . }
\label{Rsg}
\end{equation}
By imposing identical energy resource, i.e. ${\mathcal{R}} \left( \tau ,\varphi _{0} \right) = 1$, we obtain
\begin{equation} 
\omega \tau = \varphi _{0} \csc ^2 \left( \varphi _{0} / 2 \right) \text{ .}
\label{omega-tau-cons}
\end{equation}
Remarkably, the energy cost for the implementation of a controlled single-qubit gate by the transitionless 
Hamiltonian in Eq.~(\ref{HSAdcg}) is simply ${\Sigma} _{\text{SA,cg}} = \sqrt{2} {\Sigma} _{\text{SA,sg}}$ \cite{SciRep:15,Front:16}. 
The factor $\sqrt{2}$ also propagates to the adiabatic model, which implies exactly in the same ratio 
${\mathcal{R}} \left(\tau ,\varphi _{0}\right)$ and therefore in the same constraint over $\omega \tau$ provided by 
Eq.~(\ref{omega-tau-cons}). 

\subsection{Robustness against decoherence in the quantum gate Hamiltonian}

From Eq.~(\ref{Rsg}), the optimal transitionless quantum gate model will be more efficient from the 
energy point of view than its adiabatic counterpart for 
$\omega \tau \geqslant \varphi _{0} \csc ^2 \left( \varphi _{0} / 2 \right)$. Even though this condition 
solves the problem for closed quantum systems, it is a nontrivial problem the energy efficiency of 
the generalized transitionless quantum gate Hamiltonian in comparison with its adiabatic version 
when decoherence effects are not negligible in the physical system. In this section we will 
study the robustness of counter-diabatic QC by controlled evolutions against 
decoherence by considering the case of single-qubit gates in addition to the CNOT gate, which constitute a universal set of quantum gates \cite{Barenco:95}. The success of 
the protocol is measured by the fidelity  $\mathcal{F}(\tau) = \sqrt{ \bra \psi_{\text{rot}} | \rho (1) | \psi_{\text{rot}} \ket  }$, with 
$\rho (1)$ denoting the density operator for the target subsystem, obtained from Eq.~(\ref{parLindEq}).

\begin{figure*}[th!]
	\centering
	\subfloat[ ]{\includegraphics[scale=0.6]{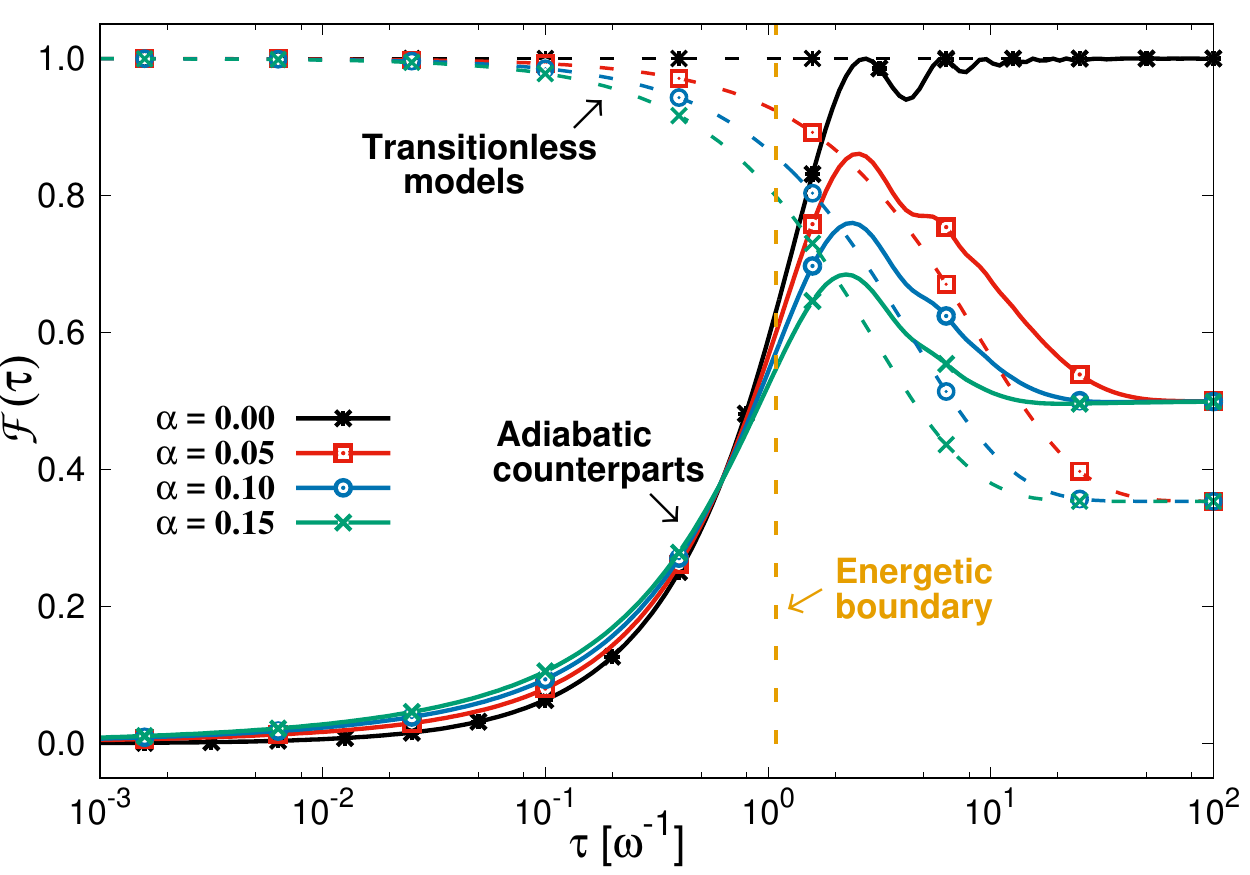}\label{Fig5-a}}\hspace{0.2cm}
	\subfloat[ ]{\includegraphics[scale=0.6]{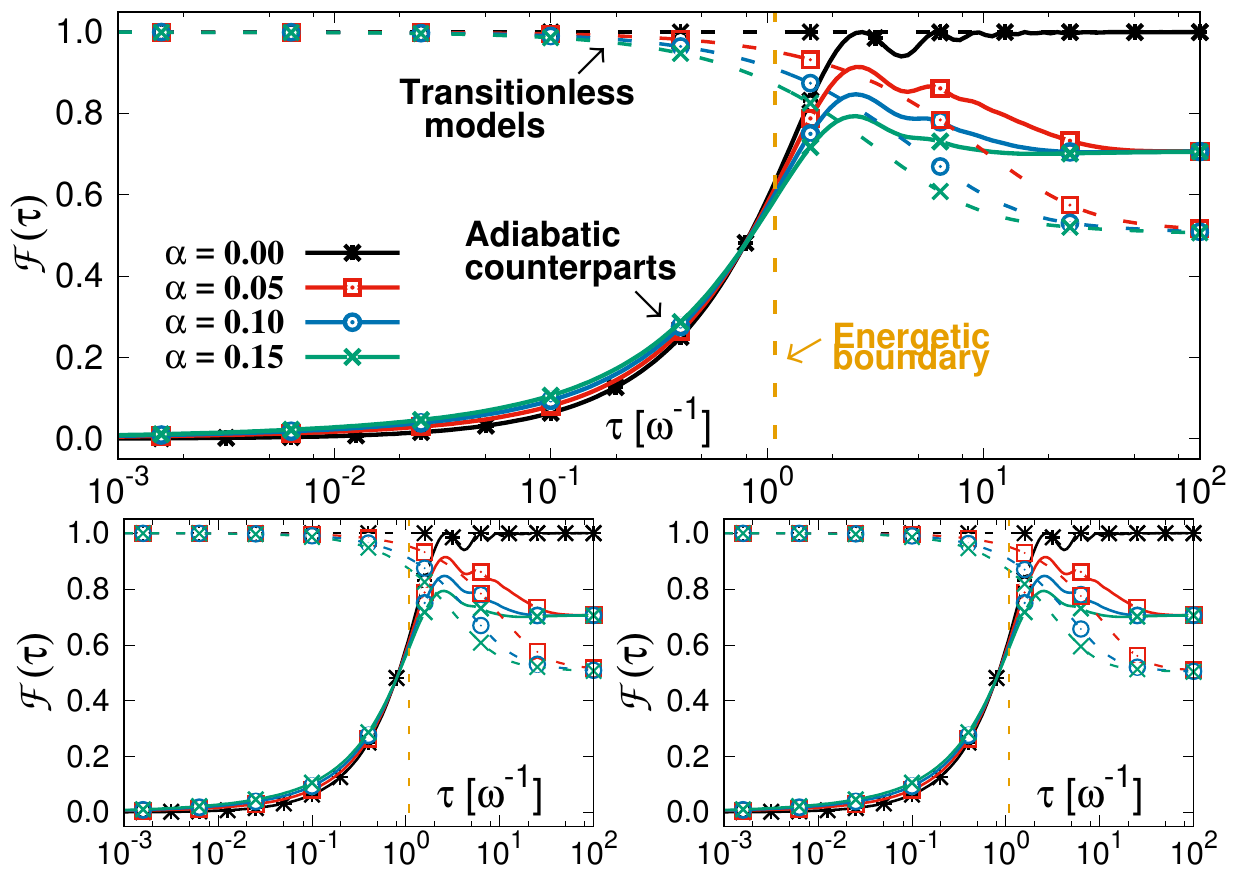}\label{Fig5-b}}
	\caption{(Color online) Fidelity ${\cal F}(\tau)$ for the implementation of (\ref{Fig5-a}) a CNOT gate to the state $| + \ket| 0 \ket$ and (\ref{Fig5-b}) single qubit gates, provided by (\textit{top}) a Hadamard gate to the state $| 0 \ket$, (\textit{bottom left}) a phase gate to the state $| + \ket$, and (\textit{bottom right}) a $\frac{\pi}{8}$-gate to the state $| + \ket$.The gates are implemented via deterministic 
		($\varphi _{0} = \pi$) adiabatic QC (solid curves) and probabilistic ($\varphi _{0} \approx 0.742 \pi$) 
		counter-diabatic QC (dashed curves), for unitary and non-unitary evolutions under dephasing 
		for different energy resources.}
	\label{HenProb-DR-Deph}
\end{figure*}

\begin{figure*}[th!]
	\subfloat[ ]{\includegraphics[scale=0.6]{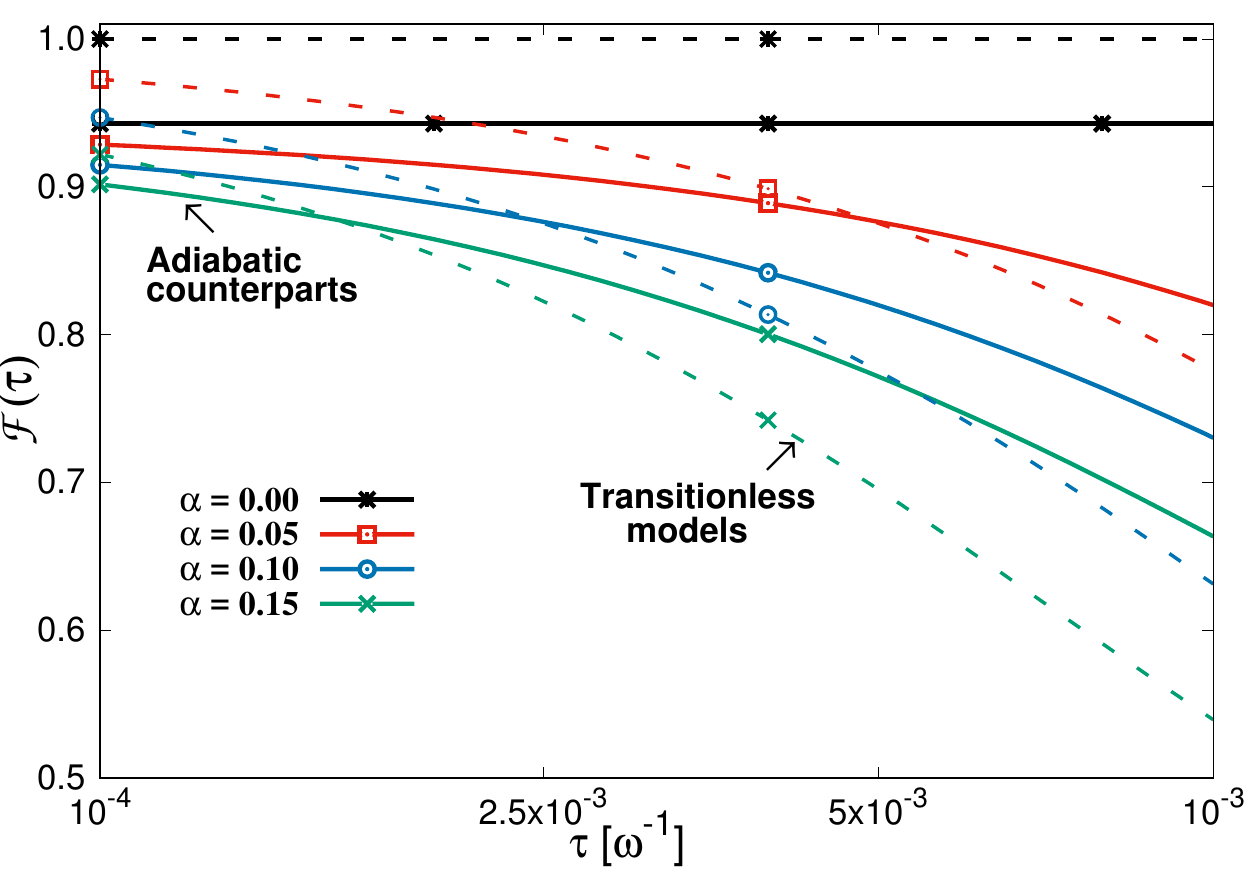}\label{Fig6-a}}\hspace{0.2cm}
	\subfloat[ ]{\includegraphics[scale=0.6]{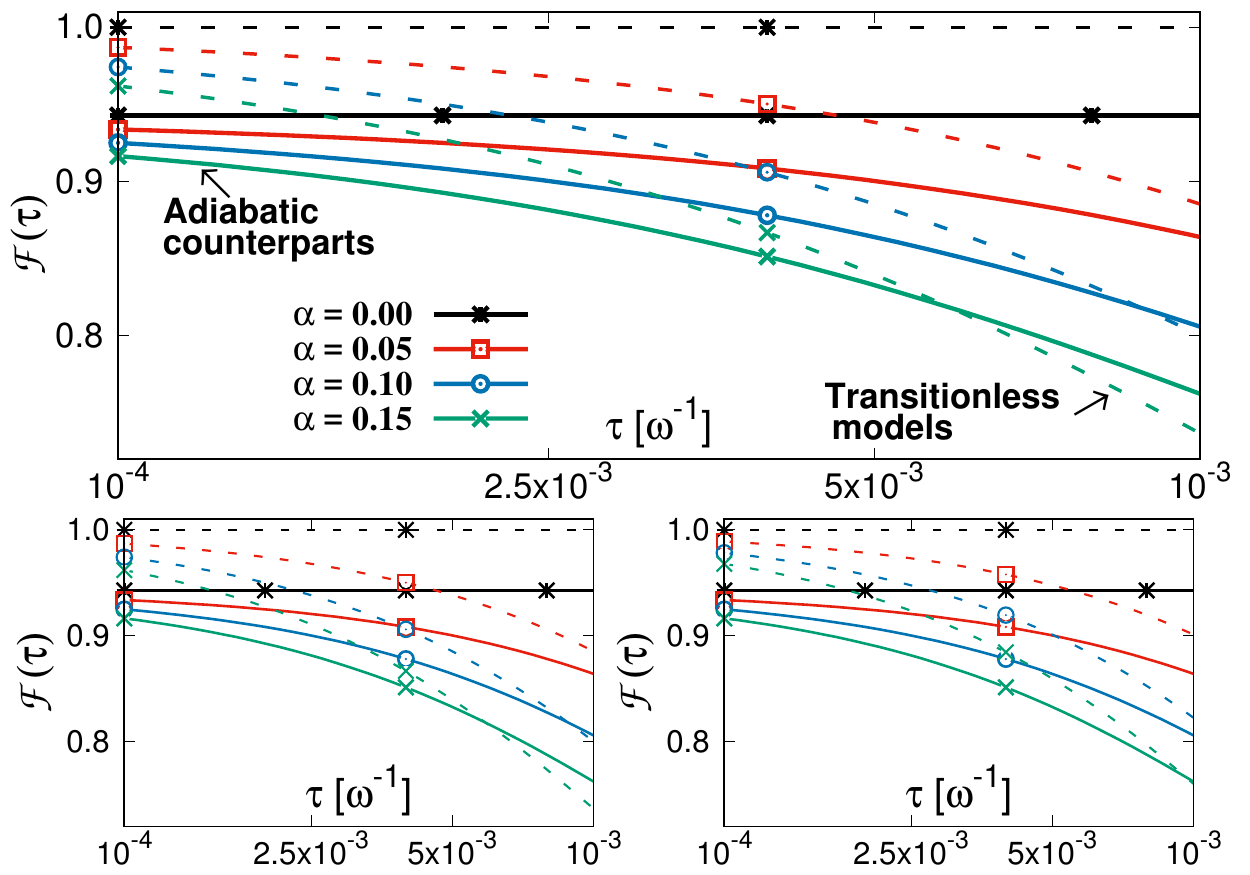}\label{Fig6-b}}
	\caption{(Color online) Fidelity ${\cal F}(\tau)$ for the implementation of (\ref{Fig6-a}) a CNOT gate to the state $| + \ket| 0 \ket$ and (\ref{Fig6-b}) single qubit gates, provided by (\textit{top}) a Hadamard gate to the state $| 0 \ket$, (\textit{bottom left}) a phase gate to the state $| + \ket$, and (\textit{bottom right}) a $\frac{\pi}{8}$-gate to the state $| + \ket$. The gates are implemented via deterministic 
		($\varphi _{0} = \pi$) adiabatic QC (solid curves) and probabilistic ($\varphi _{0} \approx 0.742 \pi$) 
		counter-diabatic QC (dashed curves), for unitary and non-unitary evolutions under GAD 
		for identical energy resources.}
	\label{HenProb-ER-GAD}
\end{figure*}

The Hadamard gate is a rotation of $\pi / 2$ around direction $y$ in the Bloch sphere, so that we set $\phi _{\text{Had}} = \pi / 2$ and $\varepsilon  _{\text{Had}}= \delta  _{\text{Had}}= \pi / 2$ in Eqs. (\ref{nmais}) and (\ref{nmenos}). For the case of phase and $\frac{\pi}{8}$ gates, we take them as rotations around the direction $z$ of an angle $\pi$ and $\pi / 4$, respectively, so that we take $\varepsilon _{\text{pha}}= \varepsilon _{\frac{\pi}{8}}= 0$, $\phi _{\text{pha}} = \pi$, and $\phi _{\frac{\pi}{8}} = \pi / 4$. Concerning the CNOT gate, it is a controlled implementation of the operation $\sigma_{x}$ (flip gate), which can be viewed as a rotation of $\pi$ around the $x$ direction. Thus we set $\varepsilon _{\text{CNOT}} =\pi / 2 $, $\delta _{\text{CNOT}}= 0$, and $\phi_{\text{CNOT}}= \pi$. Therefore, from Eq. (\ref{HSAxi}), we have the following counter-diabatic Hamiltonians
\begin{eqnarray}
H_{\text{CD,0}} &=& \hbar \frac{\varphi _{0}}{2\tau} \sigma _{y} \text{ \ , \ } H_{\text{CD},\frac{\pi}{2}} = - \hbar \frac{\varphi _{0}}{2\tau} \sigma _{x} \text{ , } 
\\
H_{\text{CD,0}} &=& \hbar \frac{\varphi _{0}}{2\tau} \sigma _{y} \text{ \ , \ } H_{\text{CD},\pi} = - \hbar \frac{\varphi _{0}}{2\tau} \sigma _{y} \text{ , }
\\
H_{\text{CD,0}} &=& \hbar \frac{\varphi _{0}}{2\tau} \sigma _{y} \text{ \ , \ } H_{\text{CD},\frac{\pi}{4}} = \hbar \frac{\varphi _{0}}{2 \sqrt{2}  \tau}\left( \sigma _{y} - \sigma
_{x} \right) \text{ , }
\\
H_{\text{CD,0}} &=& \hbar \frac{\varphi _{0}}{2\tau} \sigma _{y} \text{ \ , \ } H_{\text{CD},\pi} = - \hbar \frac{\varphi _{0}}{2\tau} \sigma _{y} \text{ , }
\end{eqnarray}
for Hadamard, phase, $\frac{\pi}{8}$, and CNOT gates, respectively.
In order to study the robustness of single qubit gates we have considered the input stats $| \psi_{\text{Had}} \ket = | 0 \ket$ for Hadamard operation and $| \psi _{\text{pha}}\ket = | + \ket = (1/\sqrt{2})(| 0 \ket + | 1 \ket)$ for phase and $\frac{\pi}{8}$-gate. On the other hand, for the CNOT gate, we consider the initial state $| \psi (0) \ket = | + \ket| 0 \ket$ and apply the gate 
Hamiltonian to create a Bell state $|\psi_{00}\rangle = (1/\sqrt{2}) (|00\rangle + |11\rangle)$. 
Fidelity is then obtained from the explicit solution of the Lindblad equation for $\rho(1)$. For instance, for CNOT, we have $\mathcal{F}\left( \tau \right) =\sqrt{\langle \psi_{00} |\rho\left( 1\right) |\psi_{00} \rangle }$. 

\begin{figure*}[th!]
	\centering
	\subfloat[ ]{\includegraphics[scale=0.6]{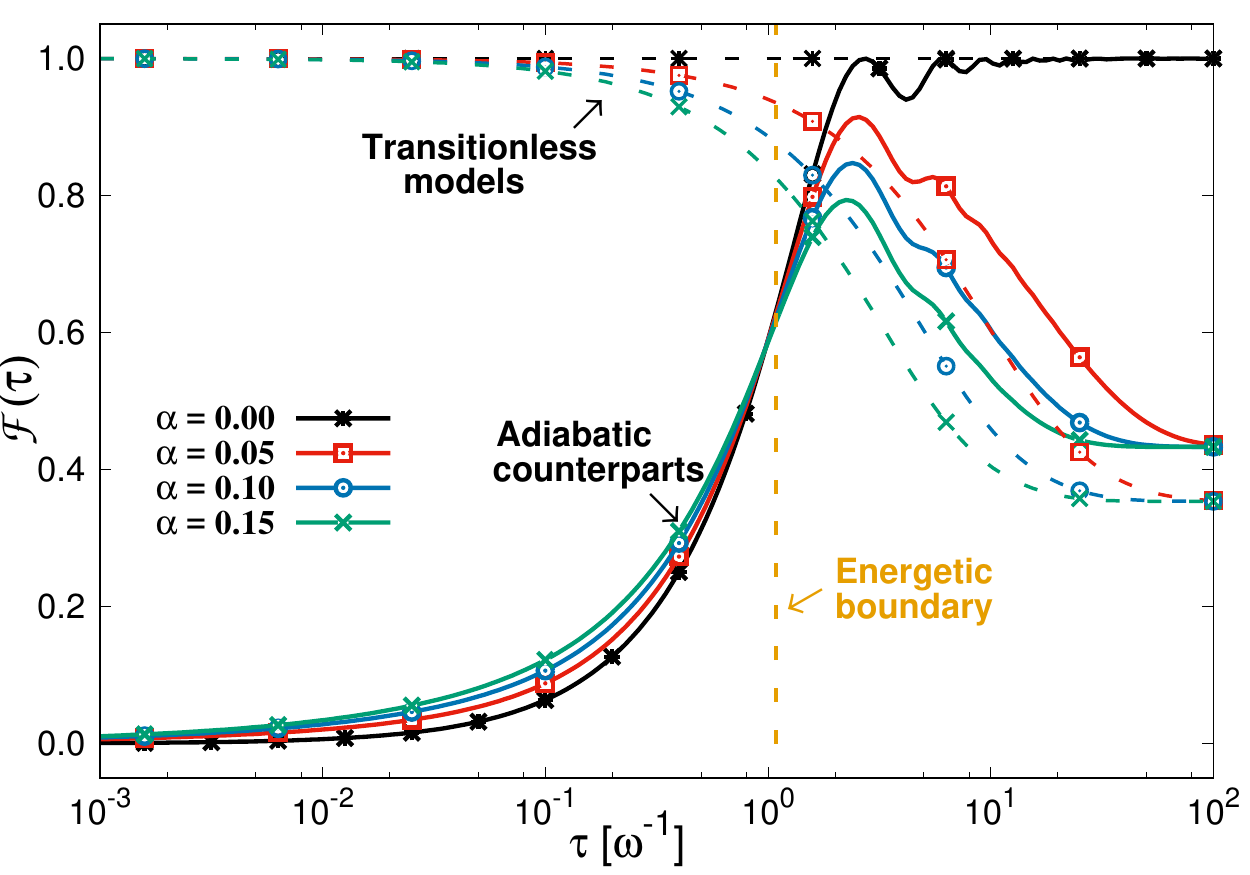}\label{Fig7-a}}\hspace{0.2cm}
	\subfloat[ ]{\includegraphics[scale=0.6]{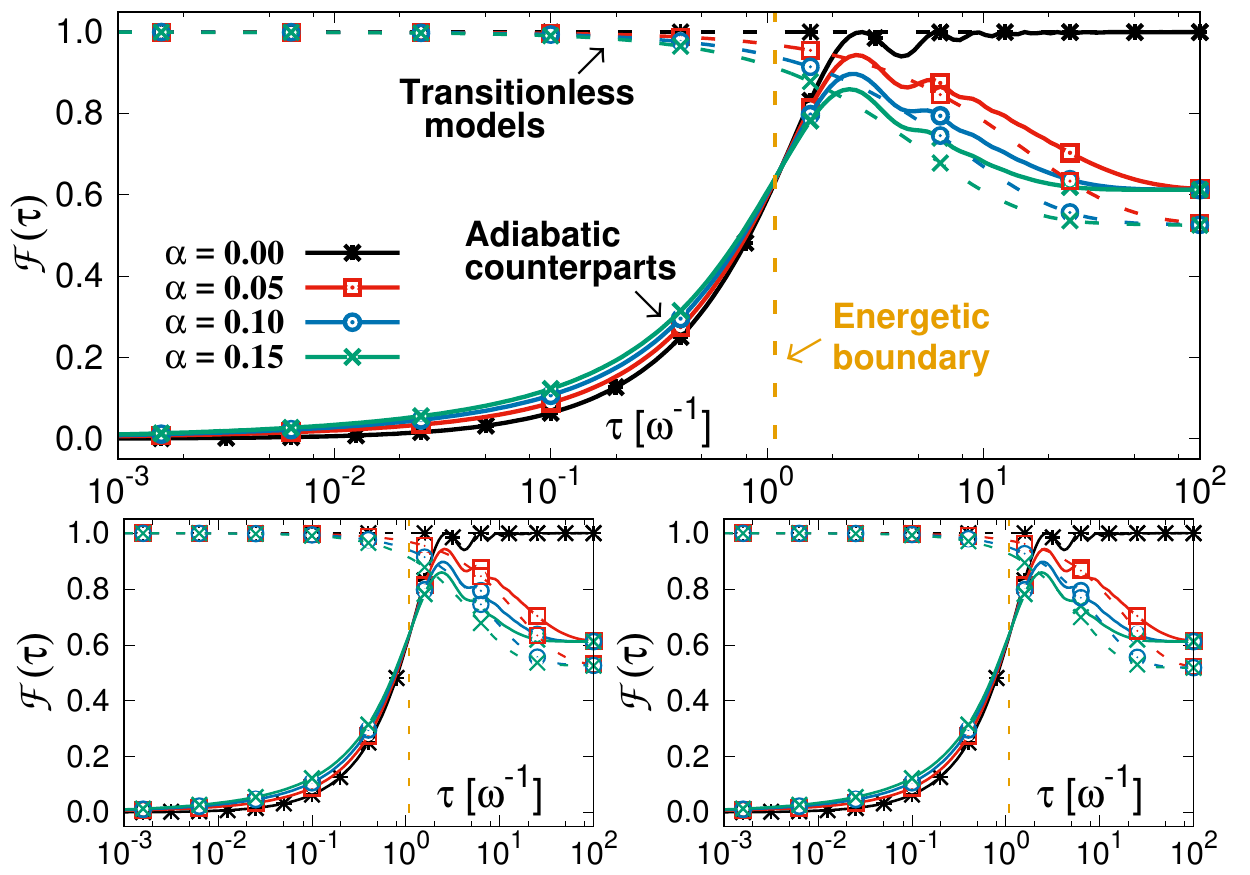}\label{Fig7-b}}
	\caption{(Color online) Fidelity ${\cal F}(\tau)$ for the implementation of (\ref{Fig7-a}) a CNOT gate to the state $| + \ket| 0 \ket$ and (\ref{Fig7-b}) single qubit gates, provided by (\textit{top}) a Hadamard gate to the state $| 0 \ket$, (\textit{bottom left}) a phase gate to the state $| + \ket$, and (\textit{bottom right}) a $\frac{\pi}{8}$-gate to the state $| + \ket$. The gates are implemented via deterministic 
		($\varphi _{0} = \pi$) adiabatic QC (solid curves) and probabilistic ($\varphi _{0} \approx 0.742 \pi$) 
		counter-diabatic QC (dashed curves), for unitary and non-unitary evolutions under GAD 
		for different energy resources.}
	\label{HenProb-DR-GAD}
\end{figure*}

The robustness of the universal set of quantum gate under dephasing is illustrated in Figs. \ref{HenProb-ER-Deph} and \ref{HenProb-DR-Deph} for identical and different resources imposed, respectively.
In these plots, we compare the optimal adiabatic (deterministic) implementation with 
its optimal transitionless version (probabilistic computation and optimal quantum phases). Note that 
the generalized transitionless approach shows a higher fidelity for fast dynamics, but there are regimes 
for which the adiabatic approach shows a better fidelity for a fixed $\alpha$.  
Energy optimization in the optimal transitionless model is achieved for $\varphi _{0} \approx 0.742 \pi$~\cite{Front:16}. Figs. \ref{HenProb-ER-GAD} and \ref{HenProb-DR-GAD} show similar results result for non-unitary evolution under GAD, with equivalent and different resources provided to the adiabatic and optimal transitionless model, respectively. 
In any case of decoherence and energetic resource, there always exist dynamical regimes for which the optimal transitionless evolutions are more robust and therefore a preferred approach in a decohering physical environment.


\section{Conclusion}

In summary, we have developed a generalized minimal energy 
demanding counter-diabatic theory, which is able to yield efficient shortcuts to adiabaticity 
via fast transitionless evolutions. Moreover, we have investigated the robustness of adiabatic 
and counter-diabatic dynamics under decoherence by introducing the requirement of fixed
energy resources, so that a comparison is settled down in a fair scenario. Then, we have shown 
both for the Landau-Zener model and for quantum gate Hamiltonians that there always exist 
dynamical regimes for which generalized transitionless evolutions are more robust and 
therefore a preferred approach in a decohering setup. 
This has been shown both for the dephasing and GAD channels acting on the eigenstate bases. 
It is also possible to show the advantage in other bases, such as the computational basis. 
The general picture is that the gain will typically occur during some finite time range, 
disappearing in the limit of long evolution times.
These results are encouraging for 
the generalized transitionless approach in the open-system realm as long as local Hamiltonians 
are possible to be designed. In the specific case of quantum gate Hamiltonians, this approach 
can be applied, e.g. to derive robust local building blocks for analog implementations of quantum 
circuits (see, e.g., Refs.~\cite{Martinis:14,Barends:16}). Experimental realizations, extensions 
for dealing with systematic errors, and generalized 
shortcuts via reservoir engineering are further directions left for future research.

\section*{Acknowledgments}
	We acknowledge Gonzalo Muga and Tameem Albash for useful discussions. 
	A.C.S. is supported by CNPq-Brazil. M.S.S. acknowledges support from CNPq-Brazil 
	(No. 303070/2016-1), FAPERJ (No. 203036/2016), and the Brazilian National Institute 
	for Science and Technology of Quantum Information (INCT-IQ).


\appendix

\section{Proof of Theorem \ref{TheoOptmEner}} \label{Theorem1}

\setcounter{theorem}{0}

\begin{theorem} 
	Consider a closed quantum system under adiabatic evolution governed by a Hamiltonian 
	$H_0\left( t\right) $. The energy cost to implement its generalized transitionless counterpart,  
	driven by the Hamiltonian $H_{\text{SA}}(t)$, can be minimized by setting  
	$\theta_{n}\left( t\right) =\theta_{n}^{\text{min}}\left( t\right)=-i\langle \dot{n}_{t}|n_{t}\rangle$.
\end{theorem}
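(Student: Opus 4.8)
The plan is to exploit the fact that the cost functional $\Sigma(\tau)$ in Eq.~(\ref{costGeneric}) is the time-average of the pointwise nonnegative quantity $\sqrt{\mathrm{Tr}[H_{\text{SA}}^{2}(t)]}$, together with the observation that the free real phases $\theta_{n}(t)$ may be chosen independently at each instant $t$. Consequently it suffices to minimize $\mathrm{Tr}[H_{\text{SA}}^{2}(t)]$ for each fixed $t$, and the monotonicity of the square root then guarantees that such a pointwise minimizer also minimizes the integral. First I would evaluate $H_{\text{SA}}$ directly in the instantaneous eigenbasis $\{|n_{t}\rangle\}$ of $H_{0}(t)$. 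Starting from the explicit form in Eq.~(\ref{HSA}) and using orthonormality $\langle a_{t}|n_{t}\rangle=\delta_{an}$, the matrix elements are
\begin{equation}
\langle a_{t}|H_{\text{SA}}(t)|b_{t}\rangle = i\langle a_{t}|\dot{b}_{t}\rangle - \theta_{a}(t)\,\delta_{ab}.
\end{equation}

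The next step records the consequences of orthonormality. Differentiating $\langle n_{t}|m_{t}\rangle=\delta_{nm}$ gives $\langle\dot{n}_{t}|m_{t}\rangle=-\langle n_{t}|\dot{m}_{t}\rangle$; in particular $\langle n_{t}|\dot{n}_{t}\rangle$ is purely imaginary, so $i\langle a_{t}|\dot{a}_{t}\rangle$ is real, the diagonal of $H_{\text{SA}}$ is real, and $\langle b_{t}|H_{\text{SA}}|a_{t}\rangle=\overline{\langle a_{t}|H_{\text{SA}}|b_{t}\rangle}$, i.e.\ $H_{\text{SA}}$ is Hermitian. Writing the trace of the square as the Hilbert–Schmidt sum of squared modulus of its matrix elements then yields
\begin{equation}
\mathrm{Tr}[H_{\text{SA}}^{2}(t)] = \sum_{a\neq b}|\langle a_{t}|\dot{b}_{t}\rangle|^{2} + \sum_{a}\big(\theta_{a}(t) - i\langle a_{t}|\dot{a}_{t}\rangle\big)^{2}.
\end{equation}
The crucial observation is that only the second sum depends on the phases $\theta_{a}(t)$, the off-diagonal part being a fixed, $\theta$-independent quantity determined solely by the eigenstates of $H_{0}(t)$.

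Finally, since $i\langle a_{t}|\dot{a}_{t}\rangle$ is real, the diagonal contribution is a genuine sum of squares of real numbers and is therefore minimized — in fact driven to zero — by the choice $\theta_{a}(t)=i\langle a_{t}|\dot{a}_{t}\rangle$. Using $\langle a_{t}|\dot{a}_{t}\rangle=-\langle\dot{a}_{t}|a_{t}\rangle$, this is precisely $\theta_{a}^{\text{min}}(t)=-i\langle\dot{a}_{t}|a_{t}\rangle$ of Eq.~(\ref{OptTeta}), which is real and hence an admissible phase. Transporting this pointwise minimizer into the integral delivers the minimal cost. The only step demanding genuine care is the bookkeeping of the second stage: verifying that the phases enter $\mathrm{Tr}[H_{\text{SA}}^{2}]$ \emph{only} through the diagonal and that they do so as a perfect square, so that the minimum is attained at a real value; the rest is direct computation. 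I would also note as a corollary that under the parallel-transport condition $\langle a_{t}|\dot{a}_{t}\rangle=0$ the minimizer collapses to $\theta_{a}^{\text{min}}(t)=0$, recovering the remark stated immediately after the theorem.
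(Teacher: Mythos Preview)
Your proof is correct and follows essentially the same strategy as the paper: both compute $\mathrm{Tr}[H_{\text{SA}}^{2}(t)]$ explicitly and minimize it pointwise in $t$, then carry the pointwise minimizer through the integral. The only cosmetic difference is that the paper locates the minimum by setting $\partial_{\theta_{n}}\mathrm{Tr}[H_{\text{SA}}^{2}(t)]=0$ and verifying the second derivative is positive, whereas you complete the square---splitting the trace into a $\theta$-independent off-diagonal part plus $\sum_{a}(\theta_{a}-i\langle a_{t}|\dot{a}_{t}\rangle)^{2}$---which makes the minimizer and its reality manifest without calculus.
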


\begin{proof}
	We adopt as a measure of energy cost the Hamiltonian Hilbert-Schmidt norm, which reads
	\begin{equation}
	\Sigma_{\text{SA}} \left( \tau \right) =\frac{1}{\tau }\int_{0}^{\tau }\sqrt{\text{Tr}%
		\left[ {H_\text{SA}}^{2}\left( t\right) \right] }\text{ }dt \text{ ,} \label{CostSI}
	\end{equation}%
	Then, we obtain
	\begin{eqnarray}
	H_{\text{SA}}^{2}\left( t\right) = \hslash ^{2}\sum\nolimits_{n} \left[ \frac{}{}|\dot{n}_{t}\rangle \langle \dot{n}_{t}|+\theta
	_{n}^{2}\left( t\right) |n_{t}\rangle \langle n_{t}|+\right.  \nonumber \\ 
	\hspace{-0.3cm} \left. i\theta _{n}\left(
	t\right) \left( |n_{t}\rangle \langle \dot{n}_{t}|-|\dot{n}_{t}\rangle
	\langle n_{t}|\right]) \frac{}{}\right] \text{ .}
	\label{HSA2}
	\end{eqnarray}
	By taking the trace of $H_{\text{SA}}^{2}\left( t\right) $ in Eq.~(\ref{HSA2}), we have%
	\begin{eqnarray}
	\text{Tr} \left[ H_{\text{SA}}^{2}\left( t\right) \right] =\sum\nolimits_{m}\langle m_{t}|H_{\text{%
			SA}}^{2}\left( t\right) |m_{t}\rangle \hspace{1.7cm} \nonumber \\
	=\hslash ^{2}\sum_{n} \left[\frac{}{} \langle \dot{n}%
	_{t}|\dot{n}_{t}\rangle +\theta _{n}^{2}\left( t\right) +2i\theta _{n}\left(
	t\right) \langle \dot{n}_{t}|n_{t}\rangle \frac{}{}\right].  \label{TrH2}
	\end{eqnarray}
	Then
	\begin{equation}
	\Sigma_{ \text{\text{SA}} } \left( \tau \right) =\frac{1}{\tau }\int_{0}^{\tau }\sqrt{\sum\nolimits\nolimits_{n} \langle \dot{n}_t|\dot{n}_t\rangle + \Gamma_{n}(\theta_{n})}\text{ }dt \text{ \ ,} \label{Supercost}
	\end{equation}
	where we have $\Gamma_{n}(\theta_{n})=\theta_{n}^{2}\left( t\right) +2i\theta _{n}\left( t\right) \langle \dot{n}_t|n_t \ket$.
	
	We can now find out the functions $\theta _{n} \left( t\right)$
	that minimize the energy cost in transitionless evolutions. For this
	end, we minimize the quantity $\Sigma _{\text{SA}}\left( \tau \right) $
	for the Hamiltonian $H_{\text{SA}}\left( t\right) $ with
	respect to parameters $\theta _{n}\left( t\right) $, where we will adopt it
	being independents. By evaluating $\partial _{\theta _{n}}\Sigma \left( \tau
	\right) $, we obtain%
\begin{equation}
	\partial _{\theta _{n}}\Sigma _{\text{SA}}\left( \tau \right) =\frac{1}{%
		2\tau }\int_{0}^{\tau }\frac{\partial _{\theta _{n}}\{\text{Tr}[H_{\text{SA}%
		}^{2}\left( t\right) ]\}}{\sqrt{\text{Tr}\left[ {H}^{2}\left( t\right) %
		\right] }}\text{ }dt \text{ .} 
\end{equation}
We then impose $\partial _{\theta _{n}}\{$Tr$[H_{%
	\text{SA}}^{2}\left( t\right) ]\}=0$ for all time $t\in \left[ 0,\tau \right]$, 
which ensures $\partial _{\theta _{n}}\Sigma _{\text{SA}}\left( \tau \right)=0$. 
Thus, by using Eq. (\ref{TrH2}), we write%
\begin{equation}
\partial _{\theta _{n}}\{\mathrm{Tr}[H_{\text{SA}}^{2}\left( t\right) ]\}=2\theta
_{n}\left( t\right) +2i\langle \dot{n}_{t}|n_{t}\rangle = 0
\text{ .}
\end{equation}
This implies
\begin{equation}
\theta _{n}\left( t\right) = \theta ^{\text{min}}_{n}\left( t\right) = -i\langle \dot{n}_{t}|n_{t}\rangle \text{ .}
\label{OptTetaAp}
\end{equation}
From the second derivative analysis, it follows that the choice for $\theta _{n}\left( t\right)$ as in Eq.~(\ref{OptTetaAp})  
necessarily minimizes the energy cost, namely, $\partial ^{2} _{\theta _{n}} \Sigma _{\text{SA}}\left( \tau \right) \vert _{\theta _{n} = \theta ^{\text{min}}_{n}}>0$, which concludes the proof. 

\end{proof}

\section{Proof of Theorem \ref{TheoTimeIndep}} \label{Theorem2}
\begin{theorem} 
	Let $H_0\left( t\right) $ be a discrete quantum Hamiltonian, with $\{|m_{t}\rangle\}$ denoting its 
	set of instantaneous eigenstates. If $\{|m_{t}\rangle\}$ satisfies $\langle k_t |\dot{m}_{t}\rangle = c_{km}$, 
	with $c_{km}$ complex constants $\forall k,m$, then a family of time-independent 
	Hamiltonians $H^{\{\theta\}}$ for generalized transitionless evolutions can be defined by setting 
	$\theta_{m}\left( t\right) = \theta$, with $\theta$ a single arbitrary real constant $\forall m$.
\end{theorem}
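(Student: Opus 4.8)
The plan is to verify directly that, once we set $\theta_m(t)=\theta$ for every $m$, the Hamiltonian $H_{\text{SA}}(t)$ of Eq.~(\ref{HSA}) becomes time-independent. First I would substitute $\theta_n(t)=\theta$ into Eq.~(\ref{HSA}) and split the result into two pieces. The phase piece collapses by completeness of the instantaneous eigenbasis, $i\sum_n i\theta\,|n_t\rangle\langle n_t|=-\theta\sum_n|n_t\rangle\langle n_t|=-\theta\1$, which is manifestly constant. Hence the whole question reduces to showing that the operator $A(t)\equiv\sum_n|\dot n_t\rangle\langle n_t|$ is time-independent, since $H^{\{\theta\}}=iA(t)-\theta\1$.

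Next I would bring in the hypothesis. Expanding $|\dot n_t\rangle$ in the instantaneous basis gives $|\dot n_t\rangle=\sum_k\langle k_t|\dot n_t\rangle\,|k_t\rangle=\sum_k c_{kn}|k_t\rangle$, so that $A(t)=\sum_{k,n}c_{kn}|k_t\rangle\langle n_t|$. In other words, $A(t)$ is represented in the \emph{moving} eigenbasis by the constant matrix $C=(c_{kn})$. I would also record the relation obtained by differentiating $\langle k_t|m_t\rangle=\delta_{km}$, namely $c_{km}=-c_{mk}^{*}$, which makes $C$ anti-Hermitian and, equivalently, $A^\dagger=-A$ as an operator.

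The core step is then to establish $\dot A=0$. Since the $c_{kn}$ are constants, differentiating $A(t)=\sum_{k,n}c_{kn}|k_t\rangle\langle n_t|$ acts only on the eigenvectors; I would substitute $|\dot k_t\rangle=\sum_j c_{jk}|j_t\rangle$ together with $\langle\dot n_t|=-\sum_q c_{nq}\langle q_t|$ (the latter from $A^\dagger=-A$). The two resulting double sums collapse to $+\sum_{j,n}(C^2)_{jn}|j_t\rangle\langle n_t|$ and $-\sum_{j,n}(C^2)_{jn}|j_t\rangle\langle n_t|$, which cancel, giving $\dot A=0$. A slicker equivalent route introduces the frame unitary $V(t)$ defined by $|n_t\rangle=V(t)|n_0\rangle$, so that $A=\dot V V^\dagger$ and the hypothesis reads $V^\dagger\dot V=C$ with $C$ constant; this integrates to $V(t)=e^{Ct}$ and hence $A=VCV^\dagger=C$, manifestly constant.

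I expect the main obstacle to be exactly the apparent paradox behind this cancellation: a constant matrix viewed in a rotating frame generically yields a \emph{time-dependent} operator, so constancy of $A$ is not automatic from constancy of $C$. The resolution is that here the frame rotation is generated by $C$ itself, so $C$ commutes with $V(t)=e^{Ct}$ and the time dependence drops out; the direct index computation encodes the same fact through the $(C^2)$ cancellation. Once $\dot A=0$ is in hand, the conclusion is immediate: $H^{\{\theta\}}=iC-\theta\1$ is a time-independent Hamiltonian for every real constant $\theta$, which is precisely the claimed family of transitionless generators.
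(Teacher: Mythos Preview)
Your argument is correct, but it proceeds along a different line from the paper's own proof. The paper computes the matrix elements $\langle k_t|\dot H_{\text{SA}}(t)|m_t\rangle$ while keeping the phases $\theta_n(t)$ general, then imposes the vanishing of the diagonal and off-diagonal pieces separately; under the hypothesis $\langle k_t|\dot m_t\rangle=c_{km}$ this yields $\dot\theta_m=0$ and $(\theta_m-\theta_k)c_{km}=0$, from which the common constant $\theta$ is \emph{derived}. Your route instead plugs in $\theta_m=\theta$ from the outset, collapses the phase term to $-\theta\1$ via completeness, and reduces the problem to showing that $A(t)=\sum_n|\dot n_t\rangle\langle n_t|$ is constant. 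Your index calculation leading to the $(C^2)$ cancellation is sound, and the frame-unitary version---$V^\dagger\dot V=C$ integrates to $V=e^{Ct}$, whence $A=VCV^\dagger=C$---is an elegant restatement that makes the geometric content transparent: the eigenbasis rotates uniformly, generated by the very operator whose constancy is at stake. The paper's approach has the merit of showing that (when the $c_{km}$ are nonzero) $\theta_m=\theta$ is essentially forced, not merely sufficient; yours is more economical for the theorem as stated and yields the explicit form $H^{\{\theta\}}=iC-\theta\1$ directly.
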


\begin{proof}
	By taking the time derivative of the Hamiltonian $H_{\text{SA}}(t)$, we obtain
	\begin{equation}
	\dot{H}_{\text{SA}}\left( t\right) =i  \sum\nolimits_{n}\frac{d}{dt}\left[ \frac{%
		{}}{{}}|\dot{n}_{t}\rangle \langle n_{t}|+i\theta _{n}\left( t\right)
	|n_{t}\rangle \langle n_{t}|\frac{{}}{{}}\right] \text{ .}
	\end{equation}
	Then, the matrix elements of $\dot{H}_{\text{SA}}\left( t\right) $ 
	in the eigenbasis $\left\{ |m_{t}\rangle \right\} $ of the Hamiltonian $H_0(t)$ read
	\begin{eqnarray*}
		\langle k_{t}|\dot{H}_{\text{SA}}\left( t\right) |m_{t}\rangle  
		&=&i  \langle k_{t}|\ddot{m}_{t}\rangle +i  \sum\nolimits_{n}\langle k_{t}|\dot{n}_{t}\rangle
		\langle \dot{n}_{t}|m_{t}\rangle  \nonumber \\
		&& \hspace{-2cm}-  \left[ \dot{\theta}%
		_{k}\left( t\right) \delta _{km} 
		+\theta _{m}\left( t\right) \langle k_{t}|%
		\dot{m}_{t}\rangle +\theta _{k}\left( t\right) \langle \dot{k}%
		_{t}|m_{t}\rangle \right] .  
	\end{eqnarray*}
	Now, by using $\langle k_{t}|\dot{n}_{t}\rangle =-\langle \dot{k}_{t}|{n}%
	_{t}\rangle $, we write $\langle k_{t}|\dot{n}_{t}\rangle \langle \dot{n}%
	_{t}|m_{t}\rangle =\langle \dot{k}_{t}|n_{t}\rangle \langle n_{t}|\dot{m}%
	_{t}\rangle $ and thus%
	\begin{eqnarray}
	\langle k_{t}|\dot{H}_{\text{SA}}\left( t\right) |m_{t}\rangle 
	&=&i  \frac{d}{d{t}}\left[ \langle k_{t}|\dot{m}_{t}\rangle \right]  \nonumber \\
	&& \hspace{-2.5cm}- 
	\left\{ \dot{\theta}_{k}\left( t\right) \delta _{km}+\left[ \theta
	_{m}\left( t\right) -\theta _{k}\left( t\right) \right] \langle k_{t}|\dot{m}%
	_{t}\rangle \right\} \text{ .} \label{finaleEq}
	\end{eqnarray}%
	For $k=m$ in Eq.~(\ref{finaleEq}), 
	we impose the vanishing of the diagonal elements of $\dot{H}_{\text{SA}}(t)$, namely, 
	$\langle k_{t}|\dot{H}_{\text{SA}}\left(t\right) |k_{t}\rangle =0$. This yields
	\begin{equation}
	\dot{\theta}_{m}\left( t\right) =i\frac{d}{dt}\left[ \langle m_{t}|\dot{m}%
	_{t}\rangle \right] \text{ ,} \label{DiagSI}
	\end{equation}%
	On the other hand, for $k \neq m$ in Eq. (\ref{finaleEq}), we now impose the vanishing 
	of the off-diagonal elements of $\dot{H}_{\text{SA}}(t)$, 
	namely, $\langle k_{t}|\dot{H}_{\text{SA}}\left(
	t\right) |m_{t}\rangle =0$ $(k \ne m)$. This yields 
	\begin{equation}
	i\frac{d}{dt}\left[ \langle k_{t}|\dot{m}_{t}\rangle \right] =\left[ \theta
	_{m}\left( t\right) -\theta _{k}\left( t\right) \right] \langle k_{t}|\dot{m}%
	_{t}\rangle \,\,\,\, (k \neq m) \,\, \text{ .}  \label{NonDiagSI}
	\end{equation}%
	By taking $\langle m_t |\dot{m}_{t}\rangle \equiv c_{mm}$ in Eq.~(\ref{DiagSI}), with $c_{mm}$ 
	denoting by hypothesis complex constants, we get 
	$\theta _{m}\left( t\right) =\theta _{m}\left(0\right) \equiv \theta_{m}$, namely, $\theta _{m}\left( t\right)$ is 
	a constant function $\forall m$. Moreover, by using 
	$\langle k_t |\dot{m}_{t}\rangle \equiv c_{km}$ in Eq.~(\ref{NonDiagSI}), with $c_{km}$ denoting 
	{\it nonvanishing} complex constants, we obtain $\theta _{k} = \theta _{m}$, $\forall k,m$. 
	If $c_{km}=0$, then $\theta _{k}$ and $\theta _{m}$ are not necessarily equal, but Eq.~(\ref{NonDiagSI}) 
	will also be satisfied by this choice. Therefore, it follows that $\theta_m(t)$ can be simply taken as
	\begin{equation}
	\theta_m(t) =\theta_m = \theta \,\,\, \forall m \, ,
	\end{equation} 
	with $\theta$ a single real constant. This concludes the proof. 
\end{proof}

	\section*{References}
\providecommand{\newblock}{}


\end{document}